\algnewcommand\algorithmicinput{\textbf{Input:}}
\algnewcommand\algorithmicoutput{\textbf{Output:}}
\algnewcommand\Input{\item[\algorithmicinput]}%
\algnewcommand\Output{\item[\algorithmicoutput]}%
\theoremstyle{plain} 
\newtheorem{thm}{Theorem}[section]
\newtheorem{cor}[thm]{Corollary}
\theoremstyle{definition}
\newtheorem{rmk}[thm]{Remark}
\newtheorem{example}[thm]{Example}
\newcommand{\R}{\mathbb{R}}
\newcommand{\Z}{\mathbb{Z}}
\newcommand{\ep}{\epsilon}
\DeclareMathOperator{\Bb}{\mathcal{B}}
\DeclareMathOperator{\Ff}{\mathcal{F}}
\DeclareMathOperator{\Yy}{\mathcal{Y}} 
\DeclareMathOperator{\Oo}{\mathcal{O}} 
\DeclareMathOperator{\Uu}{\mathcal{U}}
\DeclareMathOperator{\Ex}{E} 
\DeclareMathOperator{\pr}{P} 
\DeclareMathOperator{\supp}{supp} 
\newcommand{\fiber}{\Ff_A(u)} 
\DeclareMathOperator{\tr}{\mathsf{T}} 
\DeclareMathOperator{\colsp}{colSpan} 
\DeclareMathOperator{\rank}{rank}
\DeclareMathOperator{\dpois}{Poisson}
\DeclareMathOperator{\dgamma}{Gamma}
\DeclareMathOperator{\dunif}{Uniform}
\DeclareMathOperator{\Andbf}{\textbf{ and }}
\definecolor{forestgreen}{rgb}{0,.72,0} 
\definecolor{brickred}{rgb}{.72,0,0}
\definecolor{darkcerulean}{rgb}{0.03, 0.27, 0.49}
\begin{document}

\title{Sampling lattice points in a polytope: a Bayesian biased algorithm with random updates 
}

\author{\fnms{Miles Bakenhus}\ead[label=e1]{mbakenhus@hawk.iit.edu}},
\author{\fnms{Sonja Petrovi\'c}\ead[label=e2]{sonja.petrovic@iit.edu}}

\date{\today}

\begin{abstract}
The set of nonnegative integer lattice points in a polytope, also known as the fiber of a linear map, makes an appearance in several applications including optimization and statistics. 
We address the problem of sampling from this set using three ingredients: an easy-to-compute lattice basis of the constraint matrix, a biased sampling algorithm with a Bayesian framework, and a step-wise selection method. The bias embedded in our algorithm updates sampler parameters to improve fiber discovery rate at each step chosen from previously discovered elements.
We showcase the performance of the algorithm on several examples, including fibers that are out of reach for the state-of-the-art Markov bases samplers. 
\end{abstract} 

\maketitle

%
%
%
%

%


Fix an integer matrix  $A \in \Z^{N\times M}$ and a vector $u\in \Z^{N}$, such that the system $Ax=u$ has a solution $x_0\in\Z_{\geq 0}^{M}$. Consider the feasible polytope $P:=\{x\in\R^N:Ax=u,x\geq0\}$ of all nonegative solutions to the linear system. 
The set of  integer lattice points  $P\cap \mathbb Z_{\geq 0}^M$ is called the \emph{fiber} of $u$ under the model $A$. We denote it as follows: 
	\[\fiber := \{x\in \Z_{\ge 0}^{M} \colon Ax = u\}.\] 
In this paper, we construct a new fiber sampling algorithm called  Random Updating Moving Bayesian Algorithm (RUMBA).  The input is the constraint matrix $A$, the vector $u$, and one feasible point $x_0$, and the output is a sampled subset of the $\fiber$ or, if RUMBA runs sufficiently long,  the entire set of nonnegative integer points in $P$. 
Relying on the fact that a difference of two points $x,y\in\fiber$ lies in the lattice $\ker_\mathbb Z A$, the  sampler computes a vector space basis of the lattice $\ker_\mathbb Z A$ and then takes random linear combinations of these basis elements to discover new points in the fiber. The size of the combinations and coefficients are drawn from some, typically flat, conjugate prior distribution. After a user-determined number of samples the distribution is updated and the posterior is used in the next iteration. 
For fixed runtime parameters, given the matrix $A$ and a lattice basis of $\ker_{\Z} A$, our algorithm runs in $\Oo(MK)$ time where $K$ is the number of moves in the basis. When the basis is a minimal spanning set of moves, $K = M-\rank(A)$ such that the algorithm runs in quadratic time with respect to the number of columns of $A$. The rate of discovery for the fiber is related to the sparsity of linear combinations of basis moves needed to connect the fiber, as well as the polytope diameter with respect to these moves. 

Integer points in polytopes  make a fundamental appearance in several applications, including optimization and statistics. 
In discrete optimization, the set $\fiber$ is the support set of an integer program $\min_{x\in\mathbb Z^M : Ax=u,  x\geq 0} f(x)$  for optimizing a linear objective function $f(x)=c^T x$ over the fiber; \citep{DeLoeraHemmeckeBook} is an excellent resource. 
In statistics, the fiber $\fiber$ is the support of the conditional distribution given the value of the sufficient statistic $u$ of an observed data $x$, under the so-called log-affine \citep{Lauritzen} or \emph{log-linear model} defined by the matrix $A$.
 \cite{BesagClifford89}  is an early example of applications of fiber sampling fundamental to statistical inference---although they do not refer to the set $\fiber$ as a `fiber'---and demonstrates the practicality and need for theoretical advances that can help devise irreducible Markov chains on fibers. 
In the 1990s, \cite{DS98} introduced a new sampling algorithm from $\fiber$, showing  how to explicitly construct such a Markov chain using the Metropolis-Hastings algorithm for any integer matrix $A$ and any $u$, using moves that are constructed from binomials using nonlinear algebra.  They name any finite collection of moves resulting in an aperiodic, irreducible chain a \emph{Markov basis} and prove that  any set of moves corresponding to a generating set of the toric ideal $I_A$ suffices.
\cite{petrovic2019} gives an overview of Markov bases for a general audience, while   \cite{MarkovBases25years} contains a literature overview of Markov bases from a practical point of view.  In algebraic statistics, log-linear models are sometimes called \emph{toric models}, due to their connection to toric varieties. 
  \cite{GeigerMeekSturmfels} put discrete graphical models in the context of toric  models, extending the reach of the fundamental theorem by \citeauthor{DS98}. 
Recently, statistical applications of fiber sampling to the key question of model/data fit were extended beyond log-linear 
 models; for example, \cite{karwa2016exact} extend the use of Markov bases and sampling to \emph{mixtures} of log-linear models, specifically in the context of latent-variable random networks. See Remark~\ref{remark:must cite!} for broader related literature. 

\medskip 
We are interested in sampling from $\fiber$ using an efficient algorithm which \emph{adapts as it discovers new points} in $\fiber$.  The efficiency comes from the fact that we do not use non-linear algebra, meaning we do not compute a Markov basis, relying instead  on the following key observation: 
the only bases of the lattice $\ker_\mathbb ZA$ that are computable by linear algebra are \emph{lattice bases}. 
A lattice basis consists of any set of vectors that span the integer kernel $\ker_\mathbb Z A$ as a vector space, and the minimal such set has co-rank many elements. Unfortunately, as these vectors generally correspond to a strictly smaller set of binomials than a generating set of the toric ideal $I_A$ (see Section~\ref{section:bases}), they do not constitute a Markov basis, and as such they cannot be used to construct a connected Markov chain directly.  However, a finite  integer linear combination of lattice bases elements will produce a Markov basis; the issue is that it is not known how large the combination should be, a priori, or what is an optimal set of coefficients in the combination to discover the fiber at a high rate as one samples. 
   \citeauthor{DS98} report that they \emph{``tried this idea [of random combinations of lattice basis elements] in half a dozen problems and found it does not work well"}, in the sense that on some examples, a random walk on the fiber using combinations of lattice basis elements takes millions of steps to converge, compared to a only few hundred steps with a Markov basis. \label{lattice bases do not work}
A decade later, the same idea appears in \cite{HAT12}, where \emph{``with many examples [the authors] show that the approach with lattice bases is practical"} for several statistical models. The key idea is to use a \emph{random} combination of lattice bases moves, where the coefficients are selected from a Poisson distribution, or any other distribution with infinite support, so as to guarantee a connected chain. Both papers report that choosing the Poisson parameter is \emph{``a delicate operation"} and no trivial matter. 

With this in mind, we take the next statistically logical step:  embed the idea of constructing random linear combinations with draws from a Poisson distribution into a Bayesian framework. 
Without knowing or computing the Markov basis, the RUMBA algorithm samples the points in the neighborhood of the current fiber point, and then---and this is crucial---learns which directions on the lattice are more likely to discover more points. 
 In this sense, the algorithm begins in a model-and-fiber-agnostic way, but then adjusts its parameters according to the fiber it is sampling. The intuition behind introducing bias  in this way lies in the convexity of the polytope: from a given starting point $x_0$, some directions on the lattice will travel toward the interior of the fiber, while others will result in a point outside. By convexity, there is no reason for exploring the directions going outside starting at the same $x_0$. Instead, it will be much better to bias the sampler in directions that are more likely to lead to new points in $\fiber$. 

Since our sampler  proceeds in several stages, the following is  a high-level overview to orient the reader.  
Steps 1-3 correspond to the SAMPLE  Algorithm~\ref{alg:bfs}. Step 4 is the  UPDATE Algorithm~\ref{alg:ibfs}, and Step 5 is the RUMBA Algorithm~\ref{alg:mibfs}. Figure~\ref{fig:itrParams} is a schematic of how biasing the sampler in Step 2 below moves toward discovering more of the fiber points. 
\begin{enumerate}\label{algo outline}
	\item Generate a batch of samples $X_1,\dots,X_J$ from the distribution $p(x\mid x_0,\theta)$, for some choice parameters $\theta\in \Theta$, such that the support of $p(x\mid x_0,\theta)$ contains all solutions of $Ax=u$ for any  $\theta\in\Theta$.
	\item For sampled $X_j$ that are solutions to $Ax=u$, update the parameters $\theta$ to bias the next batch of samples in towards previously sampled nonnegative solutions to the equation.
	\item Repeat this process of sampling batches and updating parameters for a specified number of iterations. 
	\item Select a new initial solution $x_1$ from the previously sampled solutions, and reset the parameters $\theta$ to their initial state. Then repeat the batch sampling and parameter updates using these values such that new samples are distributed: $X_j\sim p(x\mid x_1,\theta)$.
	\item Repeat all of the previous steps, updating the initial solution $x_t$ after completing each set of batch samples for $t = 1,\dots, T$ where $T\in \Z_{>0}$. 
\end{enumerate}

\begin{figure}[htb]
	\begin{tikzpicture}	[scale=0.75, auto]
		
		\tikzstyle{point}=[inner sep=1pt, fill=black, draw=black, shape=circle]
		\tikzstyle{opoint}=[inner sep=3pt, fill=red,fill opacity=0.2, draw=black, shape=circle]
		\tikzstyle{none}=[inner sep=0pt,fill=none, draw=none]
		\tikzstyle{edge}=[-, fill=none]
		\tikzstyle{dashed}=[densely dashed, thick, fill=none]
		\node [style=point] (4) at (0, 3) {};
		\node [style=point] (6) at (1, 2) {};
		\node [style=point] (7) at (1, 3) {}; 
		\node [style=point] (8) at (2, 2) {};
		\node [style=point] (9) at (2, 3) {};
		\node [style=point] (10) at (3, 2) {};
		\node [style=point] (11) at (3, 3) {};
		\node [style=point] (12) at (4, 2) {};
		\node [style=point] (13) at (4, 3) {};
		\node [style=point] (14) at (5, 2) {};
		\node [style=point] (15) at (6, 2) {};
		\node [style=point] (16) at (6, 3) {};
		\node [style=point] (17) at (5, 3) {};
		\node [style=point] (19) at (5, 4) {};
		\node [style=point] (20) at (4, 4) {};
		\node [style=point] (21) at (3, 4) {};
		\node [style=point] (22) at (2, 4) {};
		\node [style=point] (23) at (1, 4) {};
		\node [style=point] (24) at (0, 4) {};
		\node [style=point] (25) at (0, 5) {};
		\node [style=point] (26) at (1, 5) {};
		\node [style=point] (27) at (2, 5) {};
		\node [style=point] (28) at (3, 5) {};
		\node [style=point] (29) at (4, 5) {};
		\node [style=point] (30) at (5, 5) {};
		\node [style=point] (31) at (6, 5) {};
		\node [style=point] (32) at (0, 2) {};
		\node [style=point] (33) at (6, 4) {};
		
		\node [style=none] (38) at (2.821, 2.756) {}; 
		\node [style=none] (39) at (2.520, 4.032) {};  
		\node [style=none] (40) at (-0.821, 3.244) {};  
		\node [style=none] (41) at (-0.520, 1.968) {}; 
		
		\node [style=opoint] (e0) at (1, 3) {}; 
		
		\coordinate[label ={[label distance=3pt]270:{\tiny$\qquad\Ex_0[X]=x_0$}}] (x0) at (7);
		
		\draw [style=dashed, bend right=45] (38.center) to (39.center);  
		\draw [style=dashed, bend right=45] (39.center) to (40.center);  
		\draw [style=dashed, bend right=45] (40.center) to (41.center);  
		\draw [style=dashed, bend left=45] (38.center) to (41.center);  
		
		\node [style=none] (pad) at (1, 0.75) {};  
		
		\node [style=none] (34) at (0, 2.5) {};
		\node [style=none] (35) at (6.25, 5.75) {};
		\node [style=none] (36) at (5.5, 2) {};
		\node [style=none] (37) at (1.5, 2) {};
		\draw [style=edge] (34.center) to (35.center);
		\draw [style=edge] (35.center) to (36.center);
		\draw [style=edge] (36.center) to (37.center);
		\draw [style=edge] (34.center) to (37.center);
		
	\end{tikzpicture}
	$\qquad$\begin{tikzpicture}	[scale=0.75, auto]
		\tikzstyle{point}=[inner sep=1pt, fill=black, draw=black, shape=circle]
		\tikzstyle{opoint}=[inner sep=3pt, fill=red,fill opacity=0.2, draw=black, shape=circle]
		\tikzstyle{none}=[inner sep=0pt,fill=none, draw=none]
		\tikzstyle{edge}=[-, fill=none]
		\tikzstyle{dashed}=[densely dashed, thick, fill=none]
		
		\node [style=point] (4) at (0, 3) {};
		\node [style=point] (6) at (1, 2) {};
		\node [style=point] (7) at (1, 3) {}; 
		\node [style=point] (8) at (2, 2) {};
		\node [style=point] (9) at (2, 3) {};
		\node [style=point] (10) at (3, 2) {};
		\node [style=point] (11) at (3, 3) {};
		\node [style=point] (12) at (4, 2) {};
		\node [style=point] (13) at (4, 3) {};
		\node [style=point] (14) at (5, 2) {};
		\node [style=point] (15) at (6, 2) {};
		\node [style=point] (16) at (6, 3) {};
		\node [style=point] (17) at (5, 3) {};
		\node [style=point] (19) at (5, 4) {};
		\node [style=point] (20) at (4, 4) {};
		\node [style=point] (21) at (3, 4) {};
		\node [style=point] (22) at (2, 4) {};
		\node [style=point] (23) at (1, 4) {};
		\node [style=point] (24) at (0, 4) {};
		\node [style=point] (25) at (0, 5) {};
		\node [style=point] (26) at (1, 5) {};
		\node [style=point] (27) at (2, 5) {};
		\node [style=point] (28) at (3, 5) {};
		\node [style=point] (29) at (4, 5) {};
		\node [style=point] (30) at (5, 5) {};
		\node [style=point] (31) at (6, 5) {};
		\node [style=point] (32) at (0, 2) {};
		\node [style=point] (33) at (6, 4) {};

		\node [style=none] (e0) at (1, 3) {};  
		\node [style=opoint](e1) at (2.450, 3.250) {};  
		
		\coordinate[label =below:{\tiny$x_0$}] (x0) at (7);
		\coordinate[label ={[label distance=1pt]20:{\tiny$\Ex_1[X]$}}] (x1) at (e1);
		
		\node [style=none] (38) at (4.173, 2.987) {}; 
		\node [style=none] (39) at (3.936, 4.161) {};  
		\node [style=none] (40) at (0.727, 3.513) {};  
		\node [style=none] (41) at (0.964, 2.339) {}; 
		
		\draw [style=dashed, bend right=45] (38.center) to (39.center);  
		\draw [style=dashed, bend right=45] (39.center) to (40.center);  
		\draw [style=dashed, bend right=45] (40.center) to (41.center);  
		\draw [style=dashed, bend left=45] (38.center) to (41.center);  

		\node [style=none] (pad) at (1, 0.75) {};  
		\node [style=none] (34) at (0, 2.5) {};
		\node [style=none] (35) at (6.25, 5.75) {};
		\node [style=none] (36) at (5.5, 2) {};
		\node [style=none] (37) at (1.5, 2) {};
		\draw [style=edge] (34.center) to (35.center);
		\draw [style=edge] (35.center) to (36.center);
		\draw [style=edge] (36.center) to (37.center);
		\draw [style=edge] (34.center) to (37.center);
		
	\end{tikzpicture}
	$\qquad$\begin{tikzpicture}	[scale=0.75, auto]
		\tikzstyle{point}=[inner sep=1pt, fill=black, draw=black, shape=circle]
		\tikzstyle{opoint}=[inner sep=3pt, fill=red,fill opacity=0.2, draw=black, shape=circle]
		\tikzstyle{none}=[inner sep=0pt,fill=none, draw=none]
		\tikzstyle{edge}=[-, fill=none]
		\tikzstyle{dashed}=[densely dashed, thick, fill=none]
		
		\node [style=point] (4) at (0, 3) {};
		\node [style=point] (6) at (1, 2) {};
		\node [style=point] (7) at (1, 3) {}; 
		\node [style=point] (8) at (2, 2) {};
		\node [style=point] (9) at (2, 3) {};
		\node [style=point] (10) at (3, 2) {};
		\node [style=point] (11) at (3, 3) {};
		\node [style=point] (12) at (4, 2) {};
		\node [style=point] (13) at (4, 3) {};
		\node [style=point] (14) at (5, 2) {};
		\node [style=point] (15) at (6, 2) {};
		\node [style=point] (16) at (6, 3) {};
		\node [style=point] (17) at (5, 3) {};
		\node [style=point] (19) at (5, 4) {};
		\node [style=point] (20) at (4, 4) {};
		\node [style=point] (21) at (3, 4) {};
		\node [style=point] (22) at (2, 4) {};
		\node [style=point] (23) at (1, 4) {};
		\node [style=point] (24) at (0, 4) {};
		\node [style=point] (25) at (0, 5) {};
		\node [style=point] (26) at (1, 5) {};
		\node [style=point] (27) at (2, 5) {};
		\node [style=point] (28) at (3, 5) {};
		\node [style=point] (29) at (4, 5) {};
		\node [style=point] (30) at (5, 5) {};
		\node [style=point] (31) at (6, 5) {};
		\node [style=point] (32) at (0, 2) {};
		\node [style=point] (33) at (6, 4) {};
		
		\node [style=none] (e0) at (1, 3) {};  
		\node [style=opoint](e2) at (3.870, 3.130) {};  
		
		\coordinate[label =below:{\tiny$x_0$}] (x0) at (7);
		\coordinate[label ={[label distance=2pt]160:{\tiny$\Ex_2[X]$}}] (x1) at (e2);
		\node [style=none] (38) at (6.307, 2.829) {}; 
		\node [style=none] (39) at (5.871, 4.553) {};  
		\node [style=none] (40) at (1.433, 3.431) {};  
		\node [style=none] (41) at (1.869, 1.707) {}; 
		
		\draw [style=dashed, bend right=45] (38.center) to (39.center);  
		\draw [style=dashed, bend right=45] (39.center) to (40.center);  
		\draw [style=dashed, bend right=45] (40.center) to (41.center);  
		\draw [style=dashed, bend left=45] (38.center) to (41.center);  
		\node [style=none] (pad) at (1, 0.75) {};  
		\node [style=none] (34) at (0, 2.5) {};
		\node [style=none] (35) at (6.25, 5.75) {};
		\node [style=none] (36) at (5.5, 2) {};
		\node [style=none] (37) at (1.5, 2) {};
		\draw [style=edge] (34.center) to (35.center);
		\draw [style=edge] (35.center) to (36.center);
		\draw [style=edge] (36.center) to (37.center);
		\draw [style=edge] (34.center) to (37.center);
		
	\end{tikzpicture}
	\caption{A polytope $Ax=u$ is shown in each of the figures above together with points in $\Z^2$. \textbf{Left:} The initial sample expectation $\Ex_0[X]=x_0$ is circled in red, and the dotted ellipse has semi-major and minor axes constructed from the eigenvalues and vectors of the covariance matrix of $X$. \textbf{Middle:} The expectation, $\Ex_1[X]$, and covariance after 1 parameter update. \textbf{Right:} The expectation, $\Ex_2[X]$, and covariance after 2 parameter updates. }\label{fig:itrParams}
\end{figure}
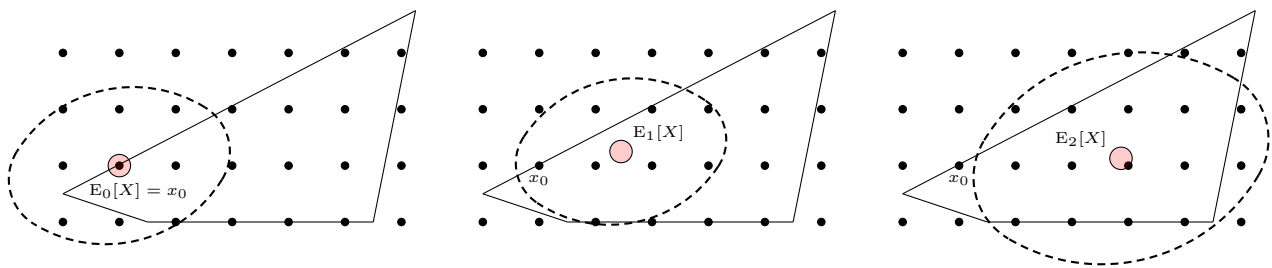

The manuscript is organized as follows. Theoretical background on the many bases of an integer lattice is in Section~\ref{section:bases}. The algorithm  is presented in Section~\ref{section:algorithms}, and several illustrative examples in Section~\ref{section:simulations}.  We close with with a discussion on parameter tuning in Section~\ref{section:practical}.

The way we run RUMBA in this paper is with the goal of discovering fiber points at a high rate. In particular, this means that our runtime parameters are set to discover more new points. If the goal is different, say, to construct binomials in the toric ideal $I_A$ which is of interest, for example, in \cite[Section 3.3]{gbViolator}, then one should bias toward those combination coefficients that give new \emph{directions} for moving about the lattice. Section~\ref{section: runtime parameters} further reflects on parameter choices. 
In particular, it would be of interest to also explore the following problem: how to bias the RUMBA sampler toward producing a certain kind of moves. The most comprehensive way of doing so is to put a conjugate prior distribution on the  probability of picking each move in the lattice basis. It is an open problem to determine how to do this to, for example, produce indispensable binomials in the ideal $I_A$, and more theoretical results in that direction are needed. For motivation, see \cite{CTV-indispensable}. 

\section{Tradeoff between basis complexity and fiber connectivity} \label{section:bases}

 The fiber sampling problem is  inherently a difficult one, if for no other reason than for the sheer size of the problem:  \cite{latte,latte2} provide a polynomial time algorithm to compute the size of $\fiber$, and even for small matrices $A$ it can be quite large. 
 In this section, we provide context for the main reasons why sampling algorithms on fibers may `get stuck'. 
A Markov basis for a fixed matrix $A$ guarantees that the Markov chain constructed from it is irreducible for every value of $u$. This means that \emph{one} Markov basis suffices to connect \emph{all} fibers, each of which is a  translate of $\ker_\mathbb Z A$. Naturally, there exist instances in which the performance of the random walk is not optimal; this is well-documented in the (algebraic) statistics literature; see, for example,  \cite{SteveAleMe-holland}, \cite[Problem 5.5]{DobraEtAl-IMA}: ``Markov bases are data-agnostic";  and \cite{MarkovBases25years} for context. 

Given a Markov basis $\mathcal{M}$ for  an integer matrix $A$, the \citeauthor{DS98} chain proceeds by constructing a Metropolis-Hastings algorithm on the fiber as a random walk on $\fiber$ from a given starting point $x_0\in\fiber$.  One step from $x_0$ consists of uniformly picking an element $m\in \mathcal{M}$ and choosing $\ep\in \{\pm 1\}$ with probability $\frac{1}{2}$ independently of $m$. The chain then moves to $x_0+\ep m$ if $x_0+\ep m\geq 0$ and stays at $x_0$ otherwise. 
Of course, in statistics, the algorithm includes the acceptance ratio which controls the stationary distribution of the chain,  see  \cite[Algorithm 1.1.13]{DSS09} for details.  
Here we do not discuss the Hastings ratio, because rather than focusing on converging to a particular distribution, our goal is to discover the fiber as quickly as possible. 

The problem of Markov bases being data-agnostic amounts  to two facts. One is that most of the moves are not needed for one given fiber because they will result in negative entries in $x_0+\ep m$, but the user does not know which moves will do so a priori, and discarding some moves may reduce the Markov chain.  This is compounded by the second issue of low connectivity of the underlying fiber graph, defined as follows. Let $M\subset \ker_\mathbb Z A$ be any set of moves on the fiber. Of course, $x-y\in\ker_\mathbb ZA$ holds for any two points $x,y\in\fiber$.  The \emph{fiber graph} $G(\fiber,M)$ is a graph with vertex set $\fiber$ such that $x,y\in\fiber$ are connected if and only if $x-y\in M$ or $y-x\in M$. 
The choice of a set $M$ will change the fiber graph; in particular, by definition, a  \emph{Markov basis} is any set $M$ that is (minimally) necessary for connectivity of the  graphs $G(\fiber,M)$ for all $u$. 
Working with a \emph{lattice basis} $B$ of the matrix $A$, as we do in the SAMPLE step, does \emph{not} produce a connected fiber graph. This is the reason why RUMBA constructs random combinations of lattice basis elements, producing, with nonnegative probability, every possible edge in on the fiber graph. 
A thorough case study of the different bases can be found in \cite{DSS09}, Section 1.3, which has the title ``The many bases of an integer lattice".  
Unfortunately, computing a Markov basis requires elimination and Gr\"obner bases (see \cite{CLO}, \cite[Chapter 4]{St}). Even so, a minimal Markov basis results in a fiber graph with far less edges than other larger bases of the matrix $A$. Figure~\ref{fig:3 fiber graphs} illustrates how the graph changes with the basis. 
\begin{figure}[h]
\includegraphics[scale=.7]{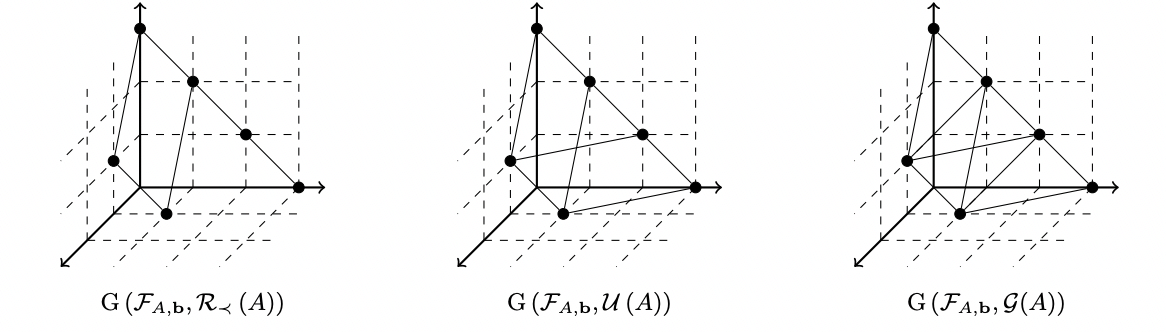}
\caption{A figure from \cite{HemmeckeWindisch:FiberGraphs}, showing three different fiber graphs for the same underlying fiber. The edges on the left figure represent moves from a reduced Gr\"obner basis, in the middle are moves from the universal Gr\"obner basis, and on the right is the Graver basis. 
\label{fig:3 fiber graphs}
}
\end{figure}

There are families of examples of `bad fibers' which are difficult to connect with Markov or Graver moves not because the moves are insufficient, but because the right-hand-side $u$ of the equation $Ax=u$ is such that it forces the solutions $x$ to live in spaces which are connected only by one or a small subset of the moves. A typical family of examples are presented in \cite[Sections 4 and 5]{HemmeckeWindisch:FiberGraphs}; we extract Figure~\ref{figure: bad fibers} which shows a bottleneck edge for fiber connectivity. 
 This type of a fiber is well-known in algebraic statistics; see also a discussion of indispensable binomials in \cite{AokiIndispensable}. 
\begin{figure}[h]
\includegraphics[scale=.6]{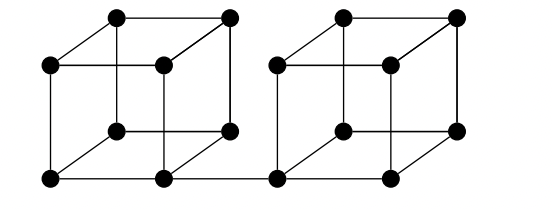}
\caption{A figure from \cite{HemmeckeWindisch:FiberGraphs} depicting a fiber with a bottleneck edge which  will have a low probability of being selected by any algorithm that randomly selects from a predetermined set of moves.}\label{figure: bad fibers}
\end{figure}
In Section~\ref{section: HW fibers} we will show the performance of RUMBA on this family of fibers.

In an ideal world, one should strive to construct a complete fiber graph because, in theory, if  one samples edges(moves) from a complete graph uniformly, one gets maximum conductance and therefore rapidly mixing Markov chains on fibers.  \citeauthor{Tobias2015mixing} makes a very intuitive suggestion: \emph{`A possible way out is to adapt the Markov basis appropriately so that its complexity grows with the size of the right-hand entries. This can be achieved by adding a varying number of $\mathbb Z$-linear combinations of the moves in a way that the edge-expansion of the resulting graph can be controlled.'} 
The issue we---and anyone constructing an algorithm meant to work for all fibers---face is that we do not know the fiber graph, we do not see the lattice structure, and we do not know how many is the `varying number' of moves, a priori. This is precisely the motivation for letting RUMBA adjust parameters and use bias to `learn on the go'. 
In fact, the RUMBA sampler does sample edges from the complete fiber graph, however the choice of moves or edges is not uniform, simply because it is computationally intractable to compute all possible moves on fibers of exponential size. 

\begin{rmk}[Sampling constraints]
By definition, $\fiber$ is  unconstrained above  except by the total sample size, that is, the 1-norm of $x_0$. 
In some applications, one may need to sample from a \emph{restricted} fiber: perhaps some entries of $x$ are set to $0$, or $x\in\{0,1\}^N$ rather than $x\in\mathbb Z^N$. 
It is known that restricted Graver bases suffice for connecting such fiber subsets \cite[Proposition 3.3]{MarkovBases25years}, but of course they are difficult to compute for large matrices $A$. 
There is another body of literature on sampling $0/1$ fibers in the context of graph algorithms; \cite{MixingTimeSwitchChain-overview} provides an excellent summary of the state of the art on rapidly mixing Markov chains in this context.  
\end{rmk} 

\begin{rmk}[Broader related literature]\label{remark:must cite!}
One could abandon pure MCMC methods entirely and devise alternative sampling algorithms. One such success story is \cite{KahleYoshidaGarciaPuente} who, motivated by the high complexity of determining a Markov basis for some models, combine it with sequential importance sampling (SIS). Prior work on purely using SIS, however, was less impressive, as \cite{Dob2012} found, in numerical comparisons, that the a Markov bases approach computed dynamically performed better than SIS. 
\citeauthor{KahleYoshidaGarciaPuente} state that the motivation behind trying to use moves as large as possible in order for the chain to `get random much more rapidly' stems from the hit and run algorithm, and relate this to random Poisson linear combinations of lattice bases elements form \cite{HAT12}, which is discussed in more detail in \cite[Chapter 16]{AHT2012}. 
For relevant background on Markov bases and the connection between statistics and nonlinear algebra, we  refer the interested reader to one of the algebraic statistics textbooks \cite{SethBook,DSS09}. 
 A notable related body of work is summarized in \cite{Diaconis22approximateExchangability}, where Markov bases are key to formulating partial exchangeability for contingency tables; see also \cite{Diaconis2022exchangabilityTables}. 
\end{rmk} 

\section{RUMBA: Random Updating Moving Bayesian Algorithm} \label{section:algorithms}

Since the algorithm uses  layers of iterations---sampling, iterate,s and time steps---and vectors of parameters, for the convenience of the reader, we have collected all of the notation in one place in Appendix~\ref{section:appendix}. For clarity, all of the indices, matrices, vectors, and random variables are also defined in the main body of the text in this section. 




Given a matrix $A \in \Z^{N \times M}$ and a vector $u\in \Z^{N}$, denote the \textit{fiber} of $A$ as before: 
$\Ff_A(u) =  \{x\in \Z_{\ge 0}^{M} \colon Ax = u\}$.
To ease the notation in this section, since the matrix $A$ and the vector $u$ are fixed, we will drop $A$ and $u$ from the fiber notation: 
\[\Ff := \fiber.\]
For some basis $\Bb = \{b_1,\dots,b_{K}\}$ of $\ker_{\Z} A$, define the matrix
\[B = \begin{pmatrix}
	b_1 & \dots & b_{K}
\end{pmatrix}.\]
For any distinct fiber elements $x,x^\prime \in \Ff$, $Ax = Ax^\prime$ such that  $x-x^\prime \in \ker_{\Z}(A)$. Then $x-x^\prime = By$ for some vector $y\in \Z^{K}$. In other words, given some initial $x_0 \in \Ff$, for any $x \in \Ff$ there is a coefficient vector $y\in \Z^{K}$ such that 
\[x = x_0 + By.\]
Algorithm \ref{alg:bfs} attempts to sample elements of $\Ff$ using a mixture of independently sampled Poisson random variables to generate coefficient vectors $y$ in the equation $x = x_t + By$ for some known fiber element $x_t$. For each step $t\in [T]$ each $x_t$ is chosen via some selection method. For example, $x_t$ may be sampled from a mixture of the uniform distribution of all previously sampled fiber elements and a uniform distribution of only the newly sampled elements in step $t-1$. 

During the $i^{\text{th}}$ iteration of the $t^{\text{th}}$ step, for a given $x_t \in \Ff$, the $k^\text{th}$ element in the $j^{\text{th}}$ sampled coefficient vector $Y_{i,j,k}$ is obtained by sampling a mixture of the Poisson random variables
\[Y^+_{i,j,k} \sim \dpois(\lambda^+_{i,k}),\quad Y^-_{i,j,k} \sim \dpois(\lambda^-_{i,k}), \quad \text{for } k\in [K]\]
for parameters $\lambda^+_{i,k},\lambda^-_{i,k} \ge 0$, such that $Y_{i,j,k}= Y^+_{i,j,k} - Y^-_{i,j,k}$. This produces the following vector in $\Z^M$ to check as a potential fiber element:
\[X_{t,i,j} = x_t + BY_{i,j}.\]
Since $BY_{i,j}$ is a linear combination of vectors in the kernel of $A$,  $AX_{t,i,j} = A(x_t + BY_{i,j}) = u$. Therefore, $X_{t,i,j} \in \Ff$ whenever $X_{t,i,j} \ge 0$. 

Parameters of the Poisson random variables are updated at each iteration in Algorithm \ref{alg:ibfs} by using the values of the $Y^+_{i,j,k}$, and $Y^-_{i,j,k}$ corresponding only to the samples $X_{t,i,j} \ge 0$ where $X_{t,i,j}$ has not previously been sampled. Denote newly sampled fiber elements in iteration $i$ for initial $x_t$ by $\Ff_{t,i}^*\subseteq \Ff_{t,i}$ such that $\Ff_{t,i}\backslash \Ff^*_{t,i} = \Ff_{t,i-1}$, and denote the corresponding coefficients 
\[\Yy_{t,i}^* = \{Y_{i,j} \in \Yy_{t,i} \colon \exists j \in [J], \text{ such that } X_{t,i,j}\in \Ff_{t,i}^* \text{ and } X_{t,i,j} = x_t + BY_{i,j} \},\]
where $\Yy_{t,i}$ is the set of all sampled coefficients that yield a fiber element up to the $i^{\text{th}}$ iteration. The following Bayesian prior for these parameters is used:
\[\lambda^+_{i,k} \sim \dgamma(\alpha^+_{i-1,k},\beta^+_{i-1,k}),\quad\lambda^-_{i,k}\sim \dgamma(\alpha^-_{i-1,k},\beta^-_{i-1,k}), \quad\text{for } k\in [K], \]
where $\alpha^+_{i-1,k}, \alpha^-_{i-1,k}$'s are shape parameters and $\beta^+_{i-1,k}, \beta^-_{i-1,k}$'s are the rate parameters. This is a conjugate prior with the following posterior distribution given elements of $\Ff_{t,i}^*$ and the corresponding sampled coefficients $\Yy_{t,i}^*$ in the $i^{\text{th}}$ iteration,
\[p(\lambda^{\pm}_{i+1,k}\mid \Ff_{t,i}^*,\Yy_{t,i}^*) \sim \dgamma\left(\alpha_{i,k},\, \beta_{i,k} \right),\]
where 
\[\alpha_{i,k}^{\pm} =\alpha^\pm_{i-1,k} +\sum_{Y_{i,j}\in \Yy_{t,i}^*}Y^{\pm}_{i,j,k}, \quad   \beta_{i,k}=\beta^{\pm}_{i-1,k} + |\Ff_{t,i}^*|.\]
This implies that the expected value of $\lambda_{i+1,k}^{\pm}$ given $\Ff^*_{t,i}$ and $\Yy_{t,i}^*$ is
\[\Ex [\lambda^{\pm}_{i+1,k}| \Ff_{t,i}^*,\Yy_{t,i}^*] = \frac{\alpha^\pm_{i,k}}{\beta^{\pm}_{i,k}}.\]
This expectation is the reason why, in Algorithm \ref{alg:bfs}, we set $\lambda^{\pm}_{i,k} = \Ex [\lambda^{\pm}_{i,k}| \Ff_{t,i-1}^*,\Yy_{t,i-1}^*]$. However, passing the updated shape and rate parameters through each iteration allows us to sample $\lambda^{\pm}_{i,k}$ if desired. 


\begin{algorithm}[!t]
	\captionsetup{font=small}
	\caption{Fiber Sample Loop (\textsc{Sample})}\label{alg:bfs} 
	\begin{algorithmic}[1]
		\Input $x_t\in \Ff,\,J\in\Z_{>0},\,B,\,\alpha^+_{i-1},\,\alpha^-_{i-1},\,\beta^+_{i-1},\,\beta^-_{i-1}\,\Ff_{t,i-1}$
		\Output $\alpha^+_{i},\,\alpha^-_{i},\,\beta^+_{i},\,\beta^-_{i},\,\Ff_{t,i}$
		\State $K = \#(\text{columns of } B)$
		\State $\Ff_{t,i} = \Ff_{t,i-1}$
		\State $\lambda^+_i = \alpha^+_{i-1} / \beta^+_{i-1}$ \Comment{Elementwise division}
		\State $\lambda^-_{i} = \alpha^-_{i-1} / \beta^-_{i-1}$
		\State $\alpha^+_{i} = \alpha^+_{i-1}$
		\State $\beta^+_{i} = \beta^+_{i-1}$
		\State $\alpha^-_{i} = \alpha^-_{i-1}$
		\State $\beta^-_{i} = \beta^-_{i-1}$
		\For{$j\in [J]$}
		\For{$k \in [K]$}
		\State $Y^{+}_{i,j,k} \sim \dpois(\lambda^+_{i,k})$
		\State $Y^{-}_{i,j,k} \sim \dpois(\lambda^-_{i,k})$
		\EndFor
		\State $Y_{i,j} = Y^{+}_{i,j} - Y^{-}_{i,j}$ 
		\State $x = x_t + BY_{i,j}$
		\If{$x \ge 0 \Andbf x\notin \Ff_{t,i}$}
		\State $\Ff_{t,i} = \Ff_{t,i} \cup \{x\}$
		\State $\alpha^+_i =\alpha^+_i+ Y^+_{i,j} $ \Comment{sum of feasible $Y^{+}_{i,j}$'s}
		\State $\alpha^-_i =\alpha^-_i+ Y^-_{i,j} $ \Comment{sum of feasible $Y^{-}_{i,j}$'s}
		\State $\beta^+_i = \beta^+_i + 1$ \Comment{$\beta^{\pm}_{i+1}$'s are the total number of feasible samples}
		\State $\beta^-_i = \beta^-_i + 1$					
		\EndIf
		\EndFor					
	\end{algorithmic}
\end{algorithm}

Algorithm \ref{alg:ibfs} iterates Algorithm \ref{alg:bfs} for a given starting element from a known sample of the fiber. In doing so, the parameters $\lambda_{i,k}^{\pm}$'s shift towards some centroid for $Y_{i,j}$ giving the expectation:
\begin{align*}
\Ex\left[x\mid x_t,\Ff_{t,i-1},\Ff_{t-1},\Yy_{t,i-1}\right] & = x_t + B(\lambda_i^{+}-\lambda_i^-) \\
& = x_t + B\left(\frac{\alpha^+_{i-1,k}}{\beta^{+}_{i-1,k}}-\frac{\alpha^-_{i-1,k}}{\beta^{-}_{i-1,k}}\right)\\
& = x_t + B\left[\frac{1}{1+|\Ff_{t,i-1}\backslash \Ff_{t-1}|}\left((\alpha_0^+ - \alpha_0^-)+\sum_{Y_{r,j}\in \Yy_{t,i-1}}Y_{r,j}\right)\right].
\end{align*}
When $\alpha^{+}_0 = \alpha_0^-$ and $\beta^{\pm}_0 = \boldmath{1}$, one obtains the following conditional expectation:  
\[\Ex\left[x\mid x_t,\Ff_{t,i-1},\Ff_{t-1},\Yy_{t,i-1}\right] = x_t + B\overline{Y},\]
where $\overline{Y}$ is the average move for the samples from $x_t$, including a move of length 0 when the sample fails to find a new fiber element.
\begin{algorithm}[!b]
	\captionsetup{font=small}
	\caption{Parameter Update Loop (\textsc{Update})}\label{alg:ibfs}
	\begin{algorithmic}[1]		
		\Input $x_t\in \Ff,\, I\in \Z_{>0},\,\, J\in \Z_{>0},\, B,\,\alpha^+_0,\,\alpha^-_0,\,\beta^+_0,\,\beta^-_0,\,\Ff_{t-1}$
		\Output $\Ff_{t}$
		\State $\Ff_{t,0} = \Ff_{t-1}$
		\For{$i \in [I]$}
		\State $\{\alpha^+_{i},\,\alpha^-_{i},\,\beta^+_{i},\,\beta^-_{i},\,\Ff_{t,i}\} = \textsc{Sample}(x_t,\,J,\,B,\,\alpha^+_{i-1},\,\alpha^-_{i-1},\,\beta^+_{i-1},\,\beta^-_{i-1},\,\Ff_{t,i-1})$
		\EndFor
		\State $\Ff_t = \Ff_{t,I}$	
	\end{algorithmic}
\end{algorithm}
Since $Y_{i,j,k}$ is a mixture of Poisson random variables and $B$ spans $\ker_{\Z}A$, it follows that $\pr(BY_{i,j}=x-x_t) >0$ for all $x,x_t \in \Ff$. However, $Y_{i,j}$ with large or dense coefficients will have small probability relative to the probabilities corresponding to sparse coefficients with relatively small magnitudes that yield fiber elements that are close to the starting element $x_t$. Instead of increasing the number of samples $J$ or iterations $I$, whenever Algorithm \ref{alg:ibfs} begins to sample a large number of previously sampled fiber elements, a new element $x_{t+1}$ may be chosen in order to sample from a different area of the fiber. 

The goal of this is to select a new starting element in such a way that the necessary magnitude and sparsity of moves to unsampled fiber elements is decreased, increasing their likelihood of being sampled. A simple heuristic strategy for selecting such a point is to uniformly sample the next starting element from $\Ff_t^* = \Ff_t\backslash \Ff_{t-1}$. In other words, randomly select the next starting element from the set of all newly-sampled fiber elements. Algorithm \ref{alg:mibfs} illustrates this process for a mixture of uniform samples from $\Ff_t^*$ and $\Ff_t$:
\begin{equation}
	\label{eqn:select}
	\pr^{(t)}(x\mid \Ff_t, \Ff_{t-1}) = \pi\dunif(\Ff_t^*) + (1-\pi)\dunif(\Ff_t),
\end{equation}
where $\pi \in [0,1]$ and $\pi = 1$ whenever $\Ff_t^* = \emptyset$.
\begin{algorithm}[!h]
	\captionsetup{font=small}
	\caption{Random Updating Moving Bayesian Algorithm}\label{alg:mibfs}
	\begin{algorithmic}[1]		
		\Input $x_0\in \Ff,T\in \Z_{>0},I\in \Z_{>0},J\in \Z_{>0},B, \alpha^+_0,\,\alpha^-_0,\,\beta^+_0,\,\beta^-_0$
		\Output $\Ff_T$
		\State $\Ff_0 = \{x_0\}$
		\For{$t\in [T]$}
		\State $\Ff_t = \textsc{Update}(x_t,\, I,\,\, J,\, B,\,\alpha^+_0,\,\alpha^-_0,\,\beta^+_0,\,\beta^-_0,\,\Ff_{t-1})$
		\State $x_{t+1} \sim \pr^{(t)}(x\mid\Ff_t,\Ff_{t-1})$ \label{step:x_t+1}
		\EndFor			
	\end{algorithmic}
\end{algorithm}

\subsection{Convergence to the fiber}

The ultimate purpose of the RUMBA algorithm is fiber discovery, so it is necessary that the fiber sample generated by the algorithm converges in probability to the actual fiber in the runtime parameters: $J= \#(\text{samples})$,  $I= \#(\text{iterations})$, and $T= \#(\text{steps})$. Each of these three parameters correspond to Algorithms \ref{alg:bfs}, \ref{alg:ibfs}, and \ref{alg:mibfs}, respectively.
The following three results prove that the partially discovered subsets of the fiber  obtained in each iteration and time step converge to the fiber as sample size, number of iterates, and time steps grow. 

Recall that $\Ff:=\fiber$  denotes the full fiber. 
Algorithm \ref{alg:bfs}, SAMPLE, is the basic fiber sample loop. It outputs a  sample $\Ff_{t,i}$ of the fiber, for a fixed time step $t$ and fixed iteration $i$. 
\begin{thm}\label{J:conv}
	 $\Ff_{t,i} \to \Ff$ as the sample size $J \to \infty$ in Algorithm \ref{alg:bfs}.
\end{thm}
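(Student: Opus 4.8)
The plan is to reduce the statement to a geometric-tail / Borel--Cantelli argument on a \emph{fixed} sampling law. First I would record the containment $\Ff_{t,i} \subseteq \Ff$ that holds for \emph{every} value of $J$: the set is initialized to $\Ff_{t,i-1} \subseteq \Ff$, and a candidate $X_{t,i,j} = x_t + BY_{i,j}$ is inserted only when $X_{t,i,j} \ge 0$; since $BY_{i,j} \in \ker_\Z A$ forces $AX_{t,i,j} = u$, every inserted point is a genuine fiber element. Consequently the claim ``$\Ff_{t,i} \to \Ff$'' is equivalent to showing $\pr(\Ff_{t,i} = \Ff) \to 1$, i.e. that no fiber element is missed, with probability tending to one.

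The crucial structural observation is that, inside a single call to \textsc{Sample}, the parameters $\lambda^{\pm}_{i,k}$ are set once (lines 3--4) and are never modified during the loop over $j \in [J]$; only the output accumulators $\alpha^{\pm},\beta^{\pm}$ change, and these feed the \emph{next} iteration rather than the current draws. Hence the coefficient vectors $Y_{i,1},\dots,Y_{i,J}$ are i.i.d.\ from one fixed law, and so are the candidates $X_{t,i,j}$. For any target $x \in \Ff$, the text has already established---using that $B$ spans $\ker_\Z A$, so $x - x_t = By$ is solvable over $\Z$, together with the full support of the Poisson difference $Y^+ - Y^-$ on $\Z$---that $p_x := \pr(X_{t,i,j} = x) = \pr(BY_{i,j} = x - x_t) > 0$.

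With independence and positivity in hand, I would bound the failure event directly. For a fixed $x \in \Ff \setminus \Ff_{t,i-1}$, the probability that none of the $J$ i.i.d.\ candidates equals $x$ is exactly $(1-p_x)^J \to 0$. Because the feasible region $P$ is a bounded polytope, $\Ff$ is finite, so a union bound gives
\[
\pr(\Ff_{t,i} \neq \Ff) \;\le\; \sum_{x \in \Ff \setminus \Ff_{t,i-1}} (1-p_x)^J \;\xrightarrow[J\to\infty]{}\; 0,
\]
a finite sum of geometrically decaying terms. This already yields convergence in probability. Moreover, since the draws are i.i.d.\ and each $p_x$ is a fixed positive constant, the second Borel--Cantelli lemma shows that each $x$ is hit infinitely often almost surely; coupling the $J$-sample run inside the $(J{+}1)$-sample run makes $\Ff_{t,i}$ monotone increasing in $J$, so these sets increase a.s.\ to $\Ff$, upgrading the conclusion to almost-sure convergence.

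I expect the main obstacle to be bookkeeping around two implicit hypotheses rather than the probabilistic core: (i) one must ensure $\lambda^{\pm}_{i,k} > 0$ so that $p_x > 0$ genuinely holds---this follows from strictly positive initial shape and rate parameters in the flat-prior setup, and is precisely the positivity fact quoted in the text; and (ii) one must fix the intended mode of convergence for the set-valued quantity $\Ff_{t,i}$ and confirm finiteness of $\Ff$ from boundedness of $P$. Once these are pinned down, the estimate above closes the argument.
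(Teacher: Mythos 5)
Your proof is correct and follows essentially the same route as the paper's: exploit that the $J$ draws within one call to \textsc{Sample} are i.i.d.\ (the $\lambda^{\pm}_{i,k}$ are fixed during the loop over $j$), establish $p_x = \pr(x = x_t + BY_{i,j}) > 0$ from $\colsp(B) = \ker_{\Z} A$ together with the full support of the Poisson-difference coefficients, and conclude via the geometric bound $(1-p_x)^J \to 0$. Your explicit union bound over the finite fiber and the Borel--Cantelli/coupling upgrade to almost-sure convergence are modest strengthenings of the paper's per-element convergence statement, but the core argument is identical.
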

\begin{proof}
Let $t \in [T]$ and $i\in [I]$, denoting $\pr^{t,i}(X)= \pr(X\mid \Ff_{t,i-1} x_t,\alpha^+_{i-1},\,\alpha^-_{i-1},\,\beta^+_{i-1},\,\beta^-_{i-1})$ for some $x_t \in \Ff_{t-1}\subseteq \Ff$. Then for any $x\in \Ff$,
since each $Y_{i,j}$ for $j \in [J]$ is sampled independently given $\lambda_i^+=\alpha^+_{i-1}/\beta^+_{i-1}$ and $\lambda_i^-=\alpha^-_{i-1}/\beta^-_{i-1}$, it follows that 
\[\pr^{t,i}(x\notin \Ff_{t,i}\mid x\notin\Ff_{t,i-1}) = \prod_{j=1}^{J}\pr^{t,i}\left(x\ne x_t + BY_{i,j}\right) = \left[1-\pr^{t,i}\left(x= x_t + BY_{i,j}\right)\right]^J.\]
Now, the probability $\pr^{t,i}(x\notin \Ff_{t,i}\mid x\notin \Ff_{t,i-1})\to 0$ as $J \to \infty$, whenever $\pr^{t,i}\left(x= x_t + BY_{i,j}\right) >0$. Since $\colsp(B) = \ker_{\Z} A$ and $x-x_t\in \ker_{\Z} A$ there exists $y \in \Z^K$ such that $x - x_t = By$. This means
\[\pr^{t,i}\left(x= x_t + BY_{i,j}\right) = \sum_{\{y\in \Z^K\colon By = x-x_t\}}\pr^{t,i}(Y_{i,j} = y).\]
Since $\supp\left(Y_{i,j}\right)= \Z^K$, for these $y$'s $\pr^{t,i}(Y = y)>0$, making $\pr^{t,i}\left(x= x_t + BY_{i,j}\right) > 0$. 
\end{proof}

Algorithm \ref{alg:ibfs}, UPDATE, is the parameter update loop. It iteratively calls SAMPLE and outputs $\Ff_{t,I}$ for a fixed time step $t$. 
\begin{cor}\label{I:conv}
	$\Ff_{t,I} \to \Ff$ as the number of iterates $I \to \infty$ at step $t$, in Algorithm \ref{alg:ibfs}.
\end{cor}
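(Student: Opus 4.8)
The plan is to deduce the corollary from Theorem~\ref{J:conv} by exploiting two features of Algorithm~\ref{alg:ibfs}: the monotonicity of the discovered sets, and the fact that the sampling law changes only upon a genuine discovery. First I would record that within a fixed step $t$ the sets are nested, $\Ff_{t,0}\subseteq \Ff_{t,1}\subseteq\cdots\subseteq\Ff$, since each call to \textsc{Sample} only ever adjoins new points to $\Ff_{t,i}$. Because the fiber $\Ff$ is finite, for every fixed $x\in\Ff$ the event $\{x\notin\Ff_{t,I}\}$ is nonincreasing in $I$, so
\[\lim_{I\to\infty}\pr(x\notin\Ff_{t,I}) = \pr\bigl(x\text{ is never discovered}\bigr).\]
It then suffices to show this limit is $0$ for each $x$ and sum over the finitely many $x\in\Ff$, since $\pr(\Ff_{t,I}\neq\Ff)\le\sum_{x\in\Ff}\pr(x\notin\Ff_{t,I})$.

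The structural observation I would isolate is that the Poisson parameters $\lambda^{\pm}_{i}=\alpha^{\pm}_{i-1}/\beta^{\pm}_{i-1}$ driving iteration $i$ are updated only inside the acceptance branch of Algorithm~\ref{alg:bfs} (where the candidate passes $x\ge 0$ and $x\notin\Ff_{t,i}$), that is, only when a \emph{new} fiber element is found. Consequently the sampling law used across iterations $i=1,2,\dots$ changes at most $|\Ff|$ times along any run, hence takes only finitely many distinct values path-wise. Each such law is a coordinatewise difference of independent Poissons and therefore has full support $\Z^{K}$, so---exactly as in the proof of Theorem~\ref{J:conv}---each assigns positive probability to the nonempty set $\{y\in\Z^{K}: By=x-x_t\}$, and hence positive probability to discovering $x$ in a single iteration. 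Taking the minimum over the finitely many laws encountered shows that, along any path on which $x$ is never found, the per-iteration conditional discovery probability is bounded below by a strictly positive (path-dependent) constant for all large $i$.

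From here I would invoke the conditional (Lévy) form of the second Borel--Cantelli lemma. Letting $E_i$ denote the event that iteration $i$ samples some coefficient vector $y$ with $By=x-x_t$, the previous paragraph gives $\sum_i \pr(E_i\mid \mathcal F_{i-1})=\infty$ almost surely on $\{x\text{ never discovered}\}$, so $E_i$ occurs infinitely often a.s.; but a single occurrence of $E_i$ discovers $x$, contradicting that $x$ is never found unless that event has probability $0$. This yields $\pr(x\text{ never discovered})=0$ for each $x$, and the summation above gives $\Ff_{t,I}\to\Ff$ in probability as $I\to\infty$.

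The main obstacle is precisely the adaptive, history-dependent nature of the parameters: Theorem~\ref{J:conv} only guarantees a positive discovery probability for a \emph{fixed} law, whereas here the law drifts as sampling proceeds and could a priori degrade the per-iteration probability toward $0$, so that the infinite product $\prod_i(1-p_i)^J$ fails to vanish. The step I would be most careful to justify is therefore the ``freezing'' of the parameters: because updates are triggered only by discoveries and $\Ff$ is finite, only finitely many distinct laws are ever used, so the per-iteration probabilities stay bounded away from $0$ and cannot accumulate to a convergent product. Everything else---monotonicity, full support, and single-iteration positivity---is inherited directly from Theorem~\ref{J:conv}.
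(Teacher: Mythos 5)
Your proposal is correct and takes essentially the same approach as the paper: both hinge on the key observation that the Poisson parameters are updated only when a new fiber element is discovered, so along any run the sampling law changes at most finitely often and eventually freezes, after which the full-support positivity established in Theorem~\ref{J:conv} forces every remaining point of $\Ff$ to be discovered. Your conditional Borel--Cantelli argument and per-point union bound are simply a more careful formalization of what the paper's proof compresses into the informal statement that the frozen tail iterations are ``equivalent to letting $J\to\infty$ in Algorithm~\ref{alg:bfs}.''
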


\begin{proof}
	Since $|\Ff|<\infty$ and $\Ff_{t,I}\subseteq \Ff$, $\Ff_{t,I}$ must converge to some finite subset of the fiber. Then there is some $r\in [I]$ such that $\Ff_{t,i}^* = \emptyset$ for all integers $i  \in \{r,\dots,I\}$. Since $\lambda_i^{\pm}$ is only updated when new points are added to the fiber sample, $\lambda_i^{\pm}=\lambda_r^{\pm}$ for all such $i$. Since $Y_{i,j,k}^{\pm} \sim \dpois(\lambda_{i}^{\pm})=\dpois(\lambda_{r}^{\pm})$. Then as $I \to \infty$, the tail iterations $i$ all sample from the same distribution $\pr^{t,r}(Y=y)$. This is equivalent to letting $J\to \infty$ in Algorithm \ref{alg:bfs}, so from Theorem \ref{J:conv}, $\Ff_{t,I}$ must converge to $\Ff$.
\end{proof}

 RUMBA calls the previous algorithms for a predetermined number of  time steps. At each time step, it outputs the sample $\Ff_{T}$. 
\begin{thm}\label{T:conv}
	$\Ff_{T} \to \Ff$ as the number of steps $T \to \infty$ in Algorithm \ref{alg:mibfs} using a selection method with distribution (\ref{eqn:select}), where $\pi = \mathbf{1}_{\{\Ff_{t-1}^*\ne\emptyset\}}$ is an indicator random variable in $\pr^{(t)}(x\mid \Ff_t, \Ff_{t-1})$.
\end{thm}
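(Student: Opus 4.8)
The plan is to exploit three structural facts. First, the discovered sets are monotone, $\Ff_{t-1}\subseteq\Ff_t\subseteq\Ff$, so because $\Ff$ is finite the sequence $\Ff_t$ is eventually constant and converges (as an increasing, bounded sequence of finite sets) to $\Ff_\infty:=\bigcup_t\Ff_t\subseteq\Ff$. Second, every starting point produced by the selection rule (\ref{eqn:select}) lies in $\Ff$, since both $\dunif(\Ff_t^*)$ and $\dunif(\Ff_t)$ are supported on $\Ff_t\subseteq\Ff$, so by induction $x_t\in\Ff$ for all $t$. Third, I will reuse the single-sample hitting argument already established in the proof of Theorem \ref{J:conv}. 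The target is to show $\Ff_\infty=\Ff$ almost surely, which then gives the asserted convergence in probability for free.

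First I would record a uniform lower bound on the chance of discovering any fixed target in one time step. Fix $z\in\Ff$ and set $q(x,z):=\pr(x+BY=z)$, where $Y$ is drawn from the Poisson mixture determined by the \emph{reset} parameters $\lambda^\pm=\alpha_0^\pm/\beta_0^\pm$. This is exactly the distribution of the very first sample of the \textsc{Update} call made at the top of each step of Algorithm \ref{alg:mibfs}. Because $z-x\in\ker_\Z A=\colsp(B)$ and $\supp(Y)=\Z^K$, the same reasoning as in Theorem \ref{J:conv} gives $q(x,z)>0$; and since $\alpha_0^\pm,\beta_0^\pm$ are reset (hence deterministic) at every step, $q(x,z)$ depends only on $(x,z)$, so $p_z^*:=\min_{x\in\Ff}q(x,z)>0$ by finiteness of $\Ff$.

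Next I would set up the filtration $\mathcal F_{t-1}$ generated by all randomness through the end of step $t-1$, so that $x_t$ and $\Ff_{t-1}$ are $\mathcal F_{t-1}$-measurable while $\Ff_t$ is not. On the event $\{z\notin\Ff_{t-1}\}$ the first sample of the \textsc{Update} call at step $t$ already equals $z$ with probability $q(x_t,z)\ge p_z^*$ and is then adjoined, so $\pr(z\in\Ff_t\mid\mathcal F_{t-1})\ge p_z^*$; on $\{z\in\Ff_{t-1}\}$ monotonicity gives $z\in\Ff_t$ automatically. Hence $\pr(z\notin\Ff_t\mid\mathcal F_{t-1})\le(1-p_z^*)\mathbf 1_{\{z\notin\Ff_{t-1}\}}$. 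Taking expectations and iterating yields $\pr(z\notin\Ff_t)\le(1-p_z^*)^t\to0$, so $\pr(z\notin\Ff_\infty)=0$; a union bound over the finitely many $z\in\Ff$ gives $\pr(\Ff_\infty\ne\Ff)=0$, i.e. $\Ff_T\to\Ff$ almost surely.

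I expect the main obstacle to be pinning down the uniform positivity $p_z^*>0$ together with the conditional-independence bookkeeping, rather than any hard estimate: the key point is that resetting $\alpha_0^\pm,\beta_0^\pm$ at the top of each step decouples the first sample from the history, so the step-wise hitting probability cannot degenerate to $0$ along the run. It is worth noting that the specific choice $\pi=\mathbf 1_{\{\Ff_{t-1}^*\ne\emptyset\}}$ is not essential to this argument — any selection rule keeping $x_t\in\Ff$ would suffice — but it is the natural one: once discovery stalls ($\Ff_{t-1}^*=\emptyset$) it forces $\pi=0$ and resamples the starting point uniformly from all of $\Ff_t$, which is precisely the behaviour guaranteeing that every discovered point is revisited and no reachable region is abandoned.
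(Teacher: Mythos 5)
Your proof is correct, but it follows a genuinely different route from the paper's. The paper argues by contradiction: since $\Ff$ is finite, the sampled sets stabilize at some (realization-dependent) set $\Ff^*$ after a time $\tau$; once discovery stalls, the indicator forces $\pi=0$, so starting points are drawn uniformly from $\Ff^*$, hence some starting point recurs infinitely often; from that recurring point any $x\in\Ff\setminus\Ff^*$ is hit by the first sample of the first (reset-parameter) iteration with a fixed positive probability $p_{x,x_t}$, so $\prod_{t>\tau}(1-p_{x,x_t})\to 0$ and $x$ would eventually be discovered with probability one, contradicting stabilization at a proper subset. You replace this contradiction-plus-recurrence structure with a direct minorization: because RUMBA passes the deterministic reset values $\alpha_0^{\pm},\beta_0^{\pm}$ into \textsc{Update} at the top of \emph{every} step, the first sample of every step hits any fixed target $z\in\Ff$ with probability at least $p_z^*=\min_{x\in\Ff}q(x,z)>0$ uniformly in the history, and conditioning on $\mathcal{F}_{t-1}$ gives the geometric bound $\pr(z\notin\Ff_t)\le(1-p_z^*)^t$, after which a union bound over the finite $\Ff$ finishes. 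Your route buys three things the paper's does not: a quantitative geometric rate, almost-sure (not just in-probability) convergence, and the explicit observation that the particular selection rule $\pi=\mathbf{1}_{\{\Ff_{t-1}^*\ne\emptyset\}}$ is immaterial—any rule keeping $x_t\in\Ff$ suffices—so you in fact prove a more general statement. It also avoids the loose points in the paper's argument, which treats the random $\tau$ and $\Ff^*$ as if fixed and conflates ``$x$ is sampled'' with ``$x$ is sampled as a starting element'' in its final step; your filtration bookkeeping makes these issues disappear. What the paper's route emphasizes instead is the specific hypothesis in the theorem statement: its recurrence argument genuinely uses the uniform re-selection from the stabilized set, which explains why the selection distribution (\ref{eqn:select}) appears in the statement at all, whereas in your proof it plays no role beyond guaranteeing $x_t\in\Ff$.
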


\begin{proof}
	Since $\Ff$ is finite, there exists $\tau\in \Z_{>0}$ such that for all $t> \tau$, $\Ff^*_t = \emptyset$; that is, $\Ff_t = \Ff^*$ for all $t>\tau$ where $\Ff^* \subseteq \Ff$. Note that since $x_0\in \Ff_0$, $\Ff^* \ne \emptyset$. At each step $t$, 
	\[x_t \sim \pr^{(t)}(x\mid \Ff_t, \Ff_{t-1}) = \mathbf{1}_{\{\Ff_{t-1}^*\ne\emptyset\}}\dunif(\Ff^*_{t-1})+ \left(\mathbf{1}_{\{\Ff_{t-1}^*=\emptyset\}}\right)\dunif(\Ff_t)\]
	and for $t>\tau$
	\[x_t \sim \dunif(\Ff^*).\]
	For $x\in \Ff\backslash \Ff^*$, we introduce the following shorthand notation: 
	\[p_{x,x_t} = \sum_{y\in \Z^K\colon By= x-x_t}\pr^{t,0}(Y=y).\]
	For any $\lambda^{\pm}_0$, $p_{x,x_t} \in (0,1)$, such that the probability $x$ will not be the first sampled element, in the first iteration of Algorithm \ref{alg:ibfs}, for any $t$ where $\tau < t \le T$, as $T\to \infty$, is $\prod_{t=\tau+1}^\infty(1-p_{x,x_t})$. This diverges to 0 if $\sum_{t=\tau+1}^T \log(1-p_{x,x_t}) \to -\infty$, and since there are finitely many $x_t\in \Ff^*$, this sum is
	\[\sum_{t=\tau+1}^T \log(1-p_{x,x_t}) = \sum_{x_t\in \Ff^*}c_{T,t}\log(1-p_{x,x_t}),\]
	where $c_{T,t}$ is the number of times $x_t$ was sampled as the starting element over all $T$ steps. Since $x_t\sim \dunif(\Ff^*)$, there is some $t$ such that $c_{T,t}\to \infty$ as $T\to \infty$. Then 
	$\sum_{x_t\in \Ff^*}c_{T,t}\log(1-p_{x,x_t})\to -\infty$, such that $\prod_{t=\tau+1}^\infty(1-p_{x,x_t})$ diverges to zero, and the probability that $x$ will be sampled as a starting element converges to 1. This contradicts that $x\in \Ff\backslash\Ff^*$ since only elements in $\Ff^*$ can be sampled as starting elements. In other words, the sampled fibers converge in probability  $\Ff_T \to \Ff$  as $T\to \infty$.
\end{proof}

\section{Simulations and experiments}\label{section:simulations}

The code for the RUMBA sampler, Algorithm \ref{alg:mibfs}, and the examples contained in this section is located at the following GitHub repository: \url{https://github.com/mbakenhus/rumba_sampler}. All simulations were written in R version 4.1.2, and run on a Windows 11 laptop using Windows Subsystem for Linux (WSL) with Ubuntu release 22.04 LTS. The laptop hardware included an AMD Ryzen 9 5900HX CPU and 16 GB of RAM, with an 8 GB limit for WSL. Runtimes and the total number of discovered points for each of the simulations are given by the table in Fig. \ref{fig:runtime}.
{\footnotesize
	\begin{figure}[!h]
		\begin{tabular}{l|rrr|r|r}
			\textbf{Fiber} & $J$ & $I$ & $T$ & \textbf{Runtime} & \textbf{Elements}	\\
			\hline
			DS98  &  100 &   5 &   5 &    2.151s  & 1622    \\
			DS98  &  100 &  10 &   5 & 	  4.132s  & 2879    \\
			DS98  &  100 &   5 &  10 &    4.224s  & 3233    \\
			DS98  & 1000 &  15 & 200 & 2872.722s  & 1262912 \\
			\hline
			$5\times 5\times 5$    ($S=   1$) &  100 &  25 &  50 & 125.388s & 81934 \\
			$5\times 5\times 5$    ($S=0.65$) &  100 &  25 &  50 & 129.928s & 59752 \\
			$5\times 5\times 5$    ($S=0.35$) &  100 &  25 &  50 & 126.528s & 15711 \\
			\hline
			$10\times 10\times 10$ ($S=   1$) &  100 &  25 &  50 & 269.855s & 93471 \\
			$10\times 10\times 10$ ($S=0.65$) &  100 &  25 &  50 & 193.664s & 26281 \\
			$10\times 10\times 10$ ($S=0.35$) &  100 &  25 &  50 & 177.717s & 2716 \\
			\hline
			Single $A_k$ & 1000 & 8 &  32 & 211.472 & 2047 \\
			Split  $A_k$ & 1000 & 8 &  24 & 158.447 & 2041 \\
		\end{tabular}
		\caption{The table shows the total runtime and number of unique elements for each fiber according to the runtime parameters: $J$ samples , $I$ iterations, and $T$ steps.}\label{fig:runtime}
	\end{figure}
}

\subsection{Independence model}\label{section: ds98 table example}

\begin{figure}[!b]
	\includegraphics[scale=0.42]{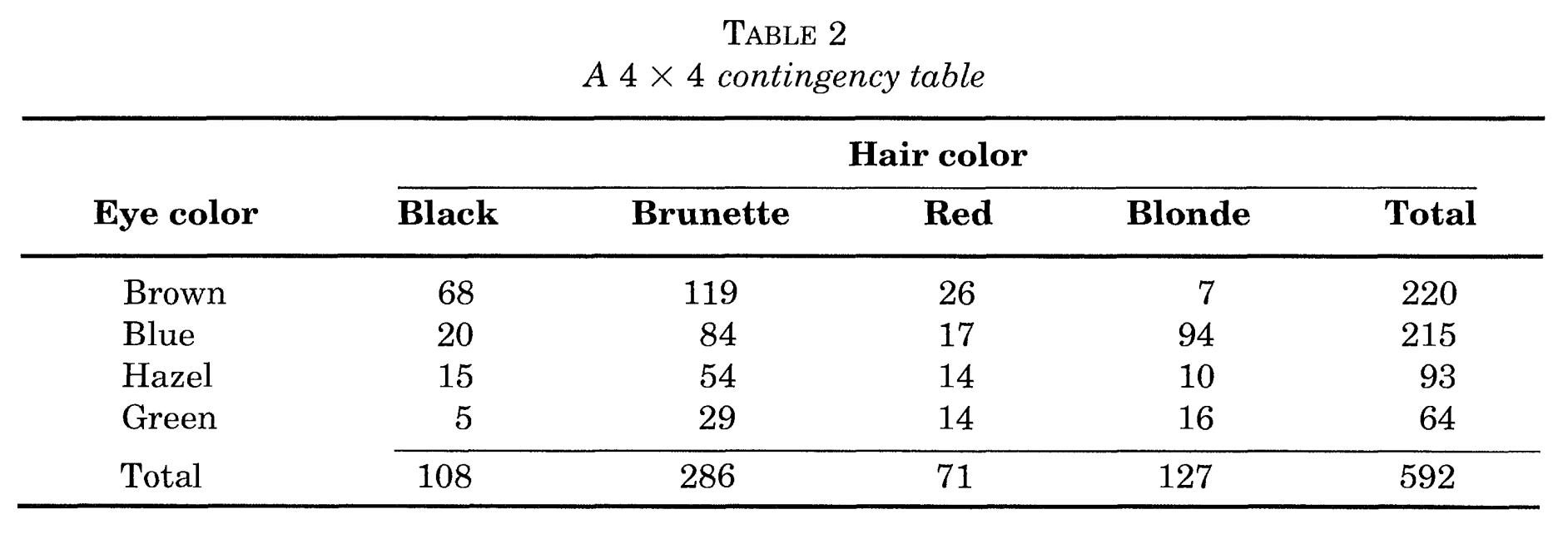}
	\caption{Table 2 from \cite{DS98} corresponding to the initial point $x_0$ in the fiber. Fiber sample results for this data table are  in Figures \ref{fig:DS98_compare} and \ref{fig:DS98_large}.}\label{fig:DS98_table}
\end{figure}

\begin{figure}[!b]
	\includegraphics[scale=0.42]{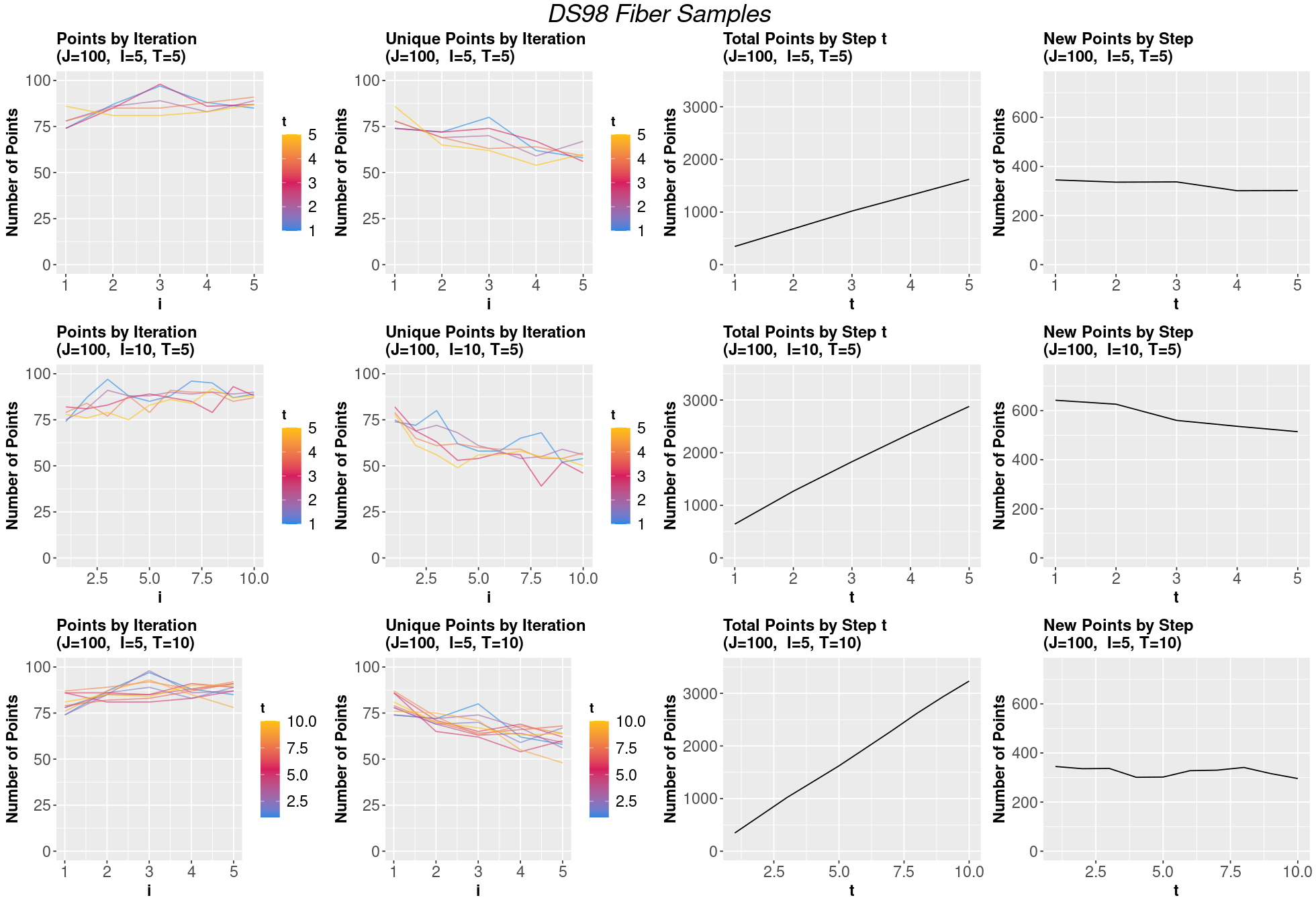}
	\caption{An illustration of Algorithm \ref{alg:mibfs} using the fiber corresponding to Table \ref{fig:DS98_table} using different parameter sets. From left to right columns, the plots describe: the total number of samples in the fiber at each iteration $i\in [I]$ for each step $t\in [T]$; the total number of samples that found new points at each iteration $i\in [I]$ and step $t\in[T]$; the cumulative total of unique fiber points sampled at each step $t\in [T]$; the number of new points sampled at each step $t\in[T]$.}\label{fig:DS98_compare}
\end{figure}

\begin{figure}[!h]
	\includegraphics[scale=0.42]{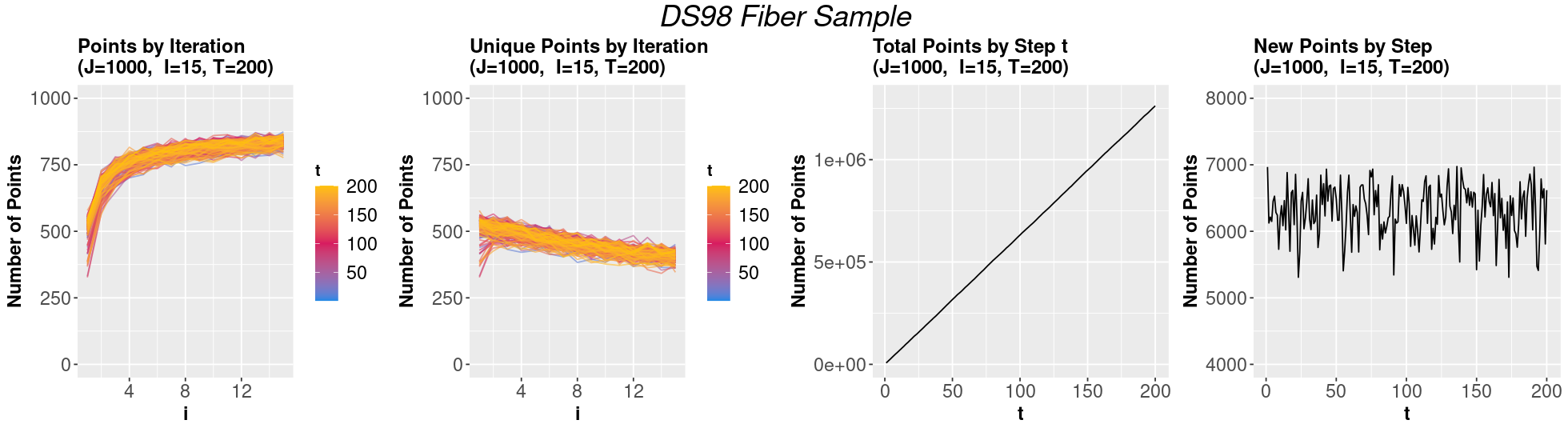}
	\caption{Algorithm \ref{alg:mibfs} sampling the fiber of the independence model on a $4\times4$ contingency table using the data point from Figure \ref{fig:DS98_table}. The algorithm was run with  $T=200$ steps, $I=15$ iterations per step, and $J=1000$ samples, for a total of 3,000,000 samples. These samples yielded 1,262,912 unique fiber elements.}\label{fig:DS98_large}
\end{figure}

We consider the independence model from \cite{DS98}, where the authors reported poor performance of the lattice bases Markov chain for exploring the fiber of the data table  presented in Figure \ref{fig:DS98_table}. 

As usual in algebraic statistics, the initial point $x_0$ corresponds to the data table by  flattening of this $4\times 4$ table, namely, $x_0=(68,20,\dots,10,16)\in\mathbb Z^{16}$. The right-hand side $u$ corresponds to the sufficient statistics of the data table under the model of independence, which are the row and column sums: $u=(220,215,93,64,108,286,71,127)\in\mathbb Z^8$. The $16\times8$ matrix $A$ is the $0/1$ matrix corresponding to the linear map which computes row and column matrix sums. 
 Our simulations illustrate that  running the RUMBA sampler with biasing and updating Poisson parameters along the run does sample the fiber at a near-constant rate.

In Figure~\ref{fig:DS98_compare}, each row of the figure corresponds to one set of values of the runtime parameters $I$, $J$, and $K$. The plots in the first  column show the number of sampled points in $\fiber$ for each iteration and at each time step. Since $J=100$, each time 100 samples are generated, between 75 and 100 of the points land in the fiber, and the rest are outside and therefore rejected. After a few samples, it is expected that the algorithm will start seeing the same points in the fiber instead of discovering new ones. This is confirmed by the (slight) negative slope of the graphs in the second column, where we plot only the new fiber points discovered. As $I$ increases from $1$ to $10$, the sampler discovers less points each time, but never less than $50$. 
Noting that colors and axes are the same as in the first column, 
the fact that for each time step corresponding to each color of the graph we see similar performance  indicates that the moving Step~\ref{step:x_t+1} of Algorithm~\ref{alg:mibfs} is helping the sampler move along the fiber faster; cf. Step 4 in the algorithm outline on page~\pageref{algo outline}. The third column is very informative for the overall performance of the RUMBA sampler, as it depicts the \emph{cumulative total of unique fiber points} sampled at each step $t$. There is no indication of the sampler stopping to discover new fiber points in this very large fiber. The forth column shows us that the fiber discovery rate remains fairly constant across time steps $t$. 

Figure~\ref{fig:DS98_large}  illustrates longer algorithm performance. The four columns of the figure have the same meaning as before. We see that after   $T=200$ time steps, $I=15$ iterations per step, and $J=1000$ samples, for a total sample size of 3,000,000, RUMBA sampler discovered 1,262,912 unique fiber elements.

\subsection{Sparse $Q\times Q \times Q$ contingency tables}

\begin{figure}[!b]
	\includegraphics[scale=0.42]{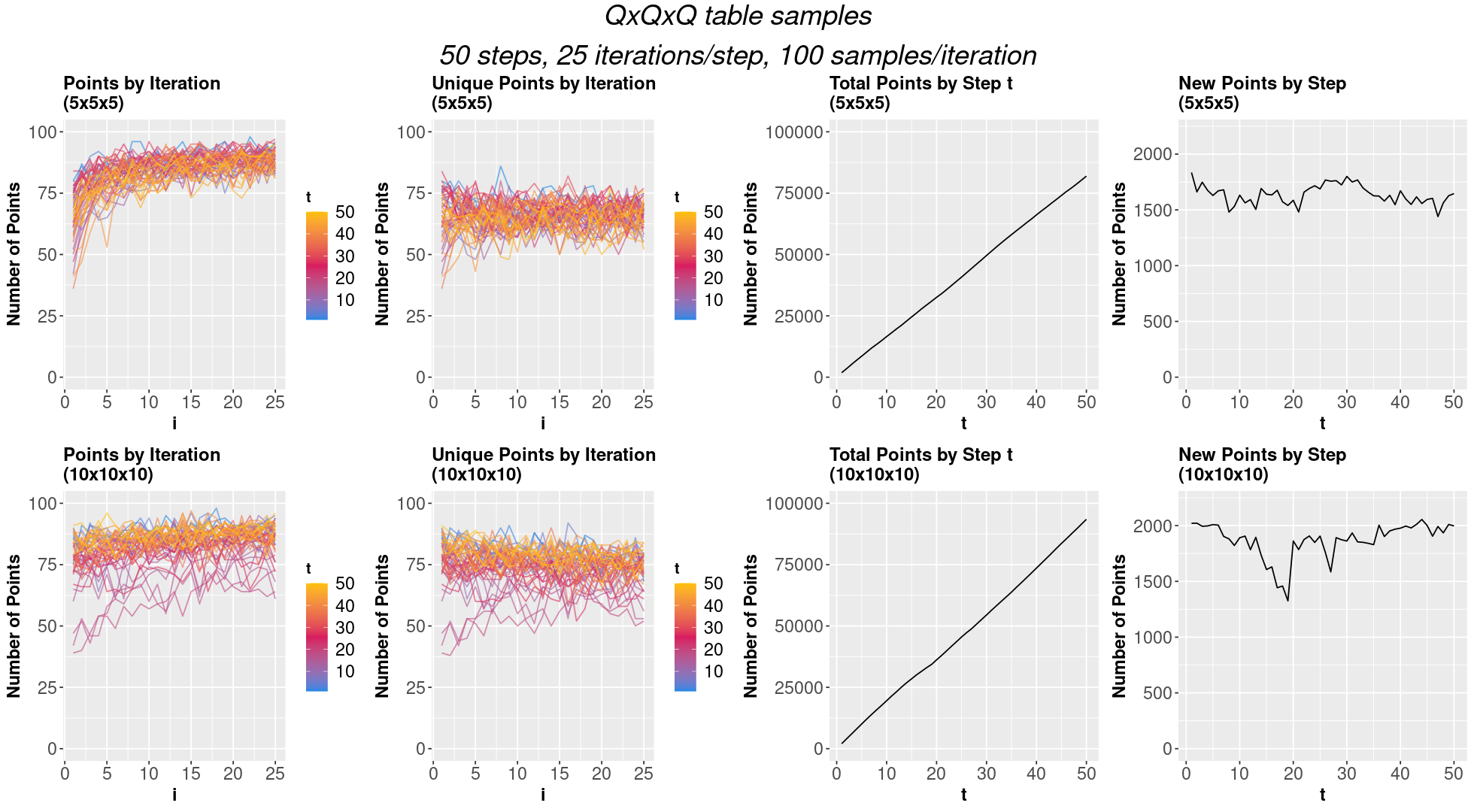}
	\caption{The results from sampling from the fiber using tables simulated from $\Uu_{Q,1}$. The top row shows results for $Q=5$ and the bottom row for $Q=10$.}\label{fig:QxQxQ}
\end{figure}

\begin{figure}[!b]
	\includegraphics[scale=0.42]{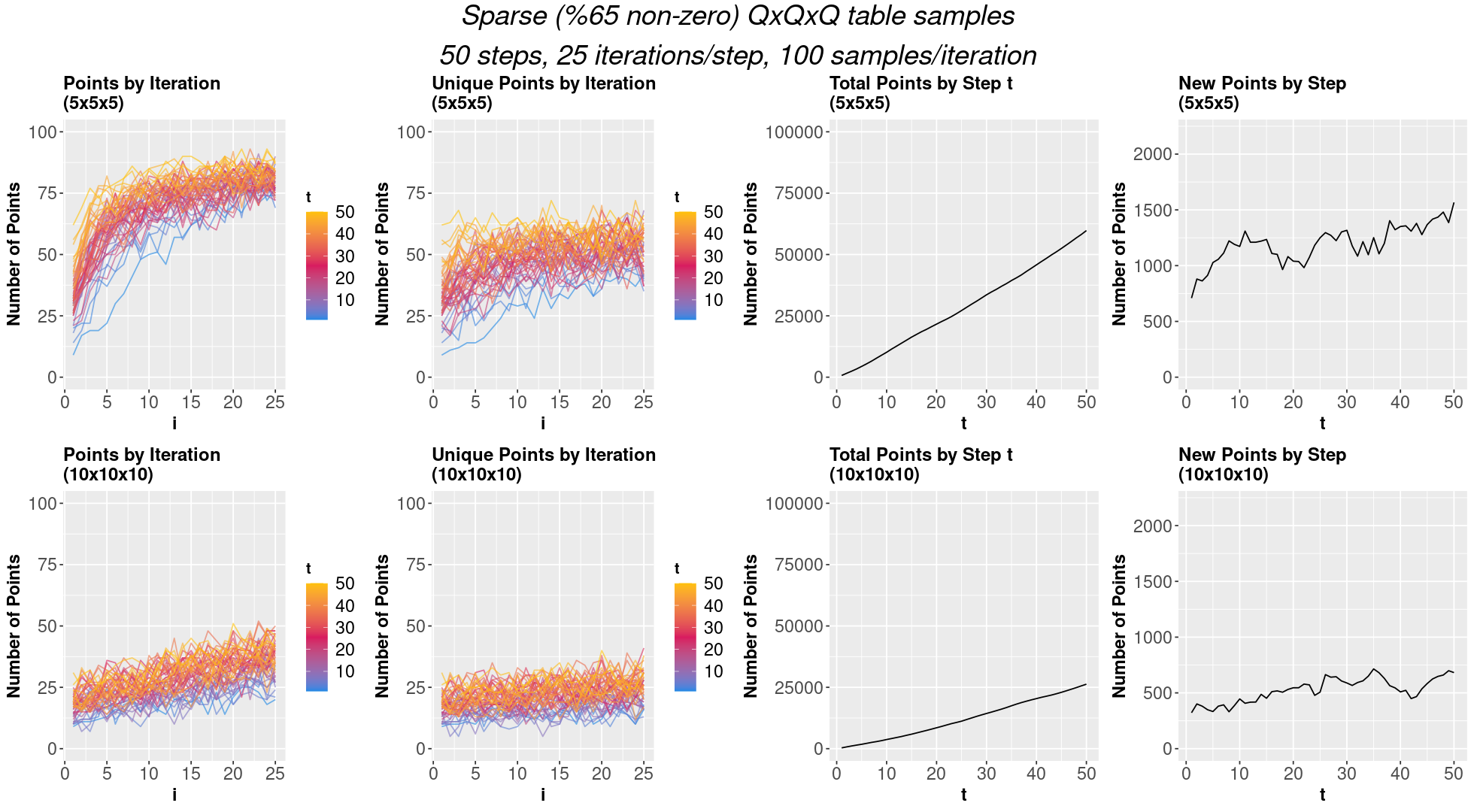}
	\caption{The results from sampling from the fiber using tables simulated from $\Uu_{Q,0.65}$. For sparse tables the sampler found fewer elements in the fiber at each iteration than for dense tables.}\label{fig:QxQxQ_65}
\end{figure}

\begin{figure}[!b]
	\includegraphics[scale=0.42]{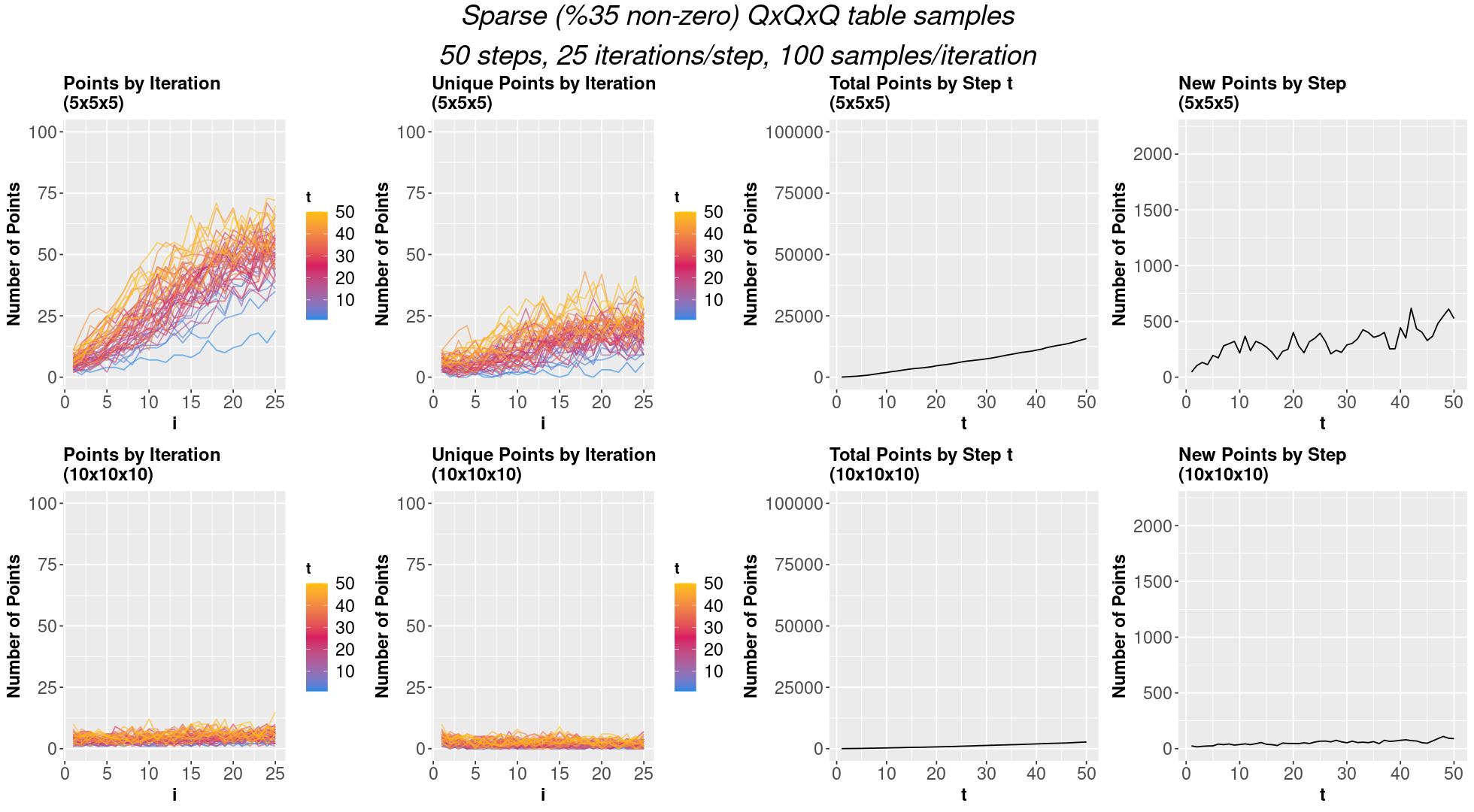}
	\caption{The results from sampling from the fiber using tables simulated from $\Uu_{Q,0.35}$. Sparse tables with high dimension did not yield large samples, and may require basis moves with better connectivity.}\label{fig:QxQxQ_35}
\end{figure} 

The independence model was the easiest configuration matrix $A$ to consider, because its Markov basis is known to be quadratic. In contrast, for no-three-factor interaction model on three-way contingency tables, depending on the levels of the three random variables, Markov bases can be either of bounded complexity or arbitrarily complicated; see \cite[Theorems 1.2.17 and 1.2.18]{DSS09} and also a summary in \cite[Section 3.2]{MarkovBases25years}. Even so, for large enough matrices, sparsity structure in the data, or the initial starting point $x_0$, can complicate the behavior of sampling algorithms because many proposed moves can be rejected.  

We sampled the fiber for a no-three-factor interaction model on $Q\times Q \times Q$ contingency tables, using simulated data at different sparsity levels. We considered $Q = 5$ and $Q= 10$. 
 \cite{HAT12} give the configuration matrix $A$  for this model as:
\def\matriximgA{%
	\begin{matrix*}[r]
		A^*     &        &     \\
		& \ddots &     \\  
		&        &   A^* \\ 
		I_{Q^2} &  \dots & I_{Q^2}  
	\end{matrix*}
}%
\[
A=\Lambda^{(Q)}(M)=\left(\vphantom{\matriximgA}\right.\kern-2\nulldelimiterspace
\overbrace{\matriximgA}^{Q \text{ times}}\kern-\nulldelimiterspace\left.\vphantom{\matriximgA}\right)\in\mathbb Z^{3Q^2 \times Q^3},
\]

where $I_{Q^2}$ is the $Q^2\times Q^2$ identity matrix and $A^*\in\Z^{2Q\times Q^2}$ is the following two-factor configuration matrix for $Q\times Q$ tables:
\def\matriximgM{%
	\begin{matrix*}[r]
		\bm{1}_Q^T &        &            \\
		           & \ddots &            \\
		           &        & \bm{1}_Q^T \\
		       I_Q &  \dots &        I_Q 
	\end{matrix*}
}%
\[
A* =\left(\vphantom{\matriximgM}\right.\kern-2\nulldelimiterspace
\overbrace{\matriximgM}^{Q \text{ times}}\kern-\nulldelimiterspace\left.\vphantom{\matriximgM}\right). 
\]

Additionally, \cite{HAT12} show how to generate the lattice basis $B\in \Z^{Q^2 \times (Q-1)K}$ for $\Lambda^{(Q)}(A^*)$ when the lattice basis for $A^*$ is $B^*\in \Z^{Q^2 \times K}$:
\def\matriximgB{%
\begin{matrix*}[r]
           B^* &         &               \\
             &  \ddots &               \\
             &         &  B^* \\
          -B^* &   \dots & -B^*  
\end{matrix*}
}%
\[
B=\left(\vphantom{\matriximgB}\right.\kern-2\nulldelimiterspace
\overbrace{\matriximgB}^{Q-1 \text{ times}}\kern-\nulldelimiterspace\left.\vphantom{\matriximgB}\right). 
\]

Initial points $x_0\in\mathbb Z^{Q^3}$ 
 in the fiber are vectorized $Q\times Q\times Q$ tables generated using three different sparsity levels,  $S = 1$, $S= 0.65$, and $S= 0.35$. The idea for generating random sparse tables as starting points in the fiber follow the simulations done by \citeauthor{HAT12}.  Namely, the sparsity level $S$ is used to construct the support of the table,  $\Uu_{Q,S}$, defined to be  the set of cells in the table which are allowed to have non-zero entries; otherwise they are set to zero. 
  For example, when $S=0.65$, this means that $65\%$ of the entries of the vector $x_0$ are nonzero. 
 We construct $\Uu_{Q,S}$  by sampling $S Q^3$ (rounded to the nearest integer)  elements without replacement from $[Q]\times [Q] \times[Q]$. 
To populate nonzero entries of $x_0$,  $Q\times Q\times Q$ tables are simulated by sampling $n=5Q^3$ values from each $\Uu_{Q,S}$ with replacement. The simulated tables are flattened into a vector and as such used as an initial fiber element $x_0$  for Algorithm \ref{alg:mibfs}. 

For our simulations, we provided the sampler the lattice basis $B$, and set the initial distribution parameters to $\alpha_{0,k}^{\pm} = \frac{1}{K}$ and $\beta_{0,k}^{\pm}=1$ for $k\in[K]$ where $K$ is the number of basis moves. Runtime parameters were $T=50$ steps, $I =25$ iterations per step, and $N=100$ samples per iteration. Figures \ref{fig:QxQxQ}, \ref{fig:QxQxQ_65}, and \ref{fig:QxQxQ_35} show the results for $\Uu_{Q,1}$, $\Uu_{Q,0.65}$ and $\Uu_{Q,0.35}$, respectively. The columns in each figure represent the same quantities as in the previous section.  Simulations on dense tables do not show a slow down in the $10\times10\times10$ example compared to $5\times5\times5$, with the final sample actually being smaller for the latter than the former.
On the other hand, table sparsity appears to negatively affect the efficiency of Algorithm \ref{alg:mibfs}. There are several possibilities for why this might be the case. Sparse tables may generate fibers that contain fewer elements than dense tables. In this case, the lattice bases require dense combinations in order to connect fiber elements. Similar issues arose for the fiber in Section \ref{section: HW fibers}. Methods for dealing with this issue are discussed further in Section \ref{section:practical}.

\subsection{A family of matrices with segmented fibers}\label{section: HW fibers}

\begin{figure}[!b]
	\includegraphics[scale=0.42]{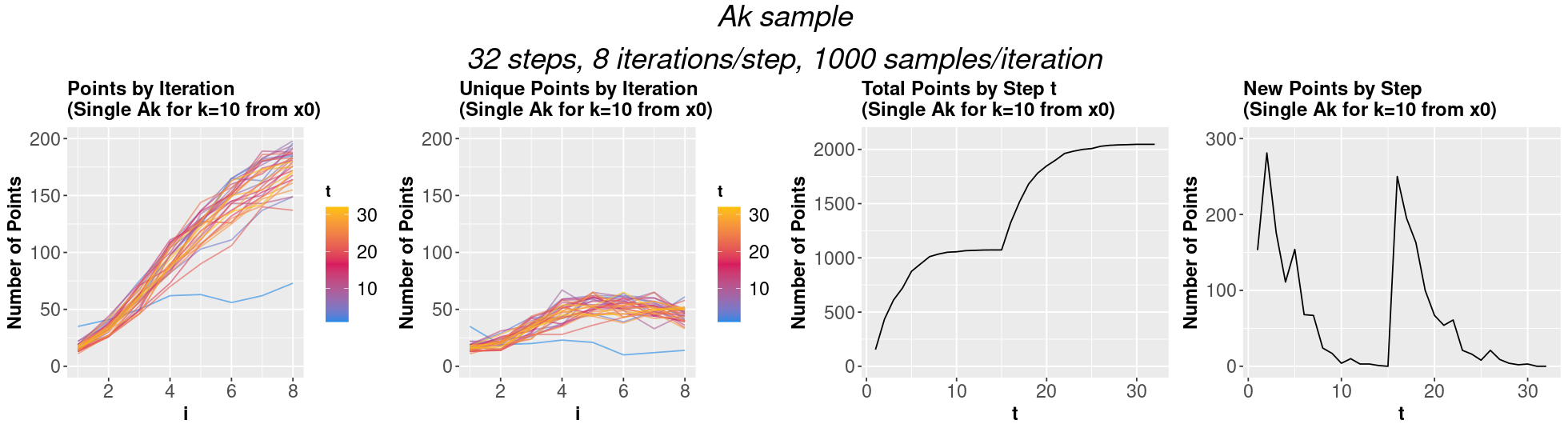}
	\caption{Results from sampling the fiber for $A_k$ with $u= \bm{e}_{2k+1}$, from the initial solution $x_0$.}\label{fig:Ak_single}
\end{figure}

\begin{figure}[!b]
	\includegraphics[scale=0.42]{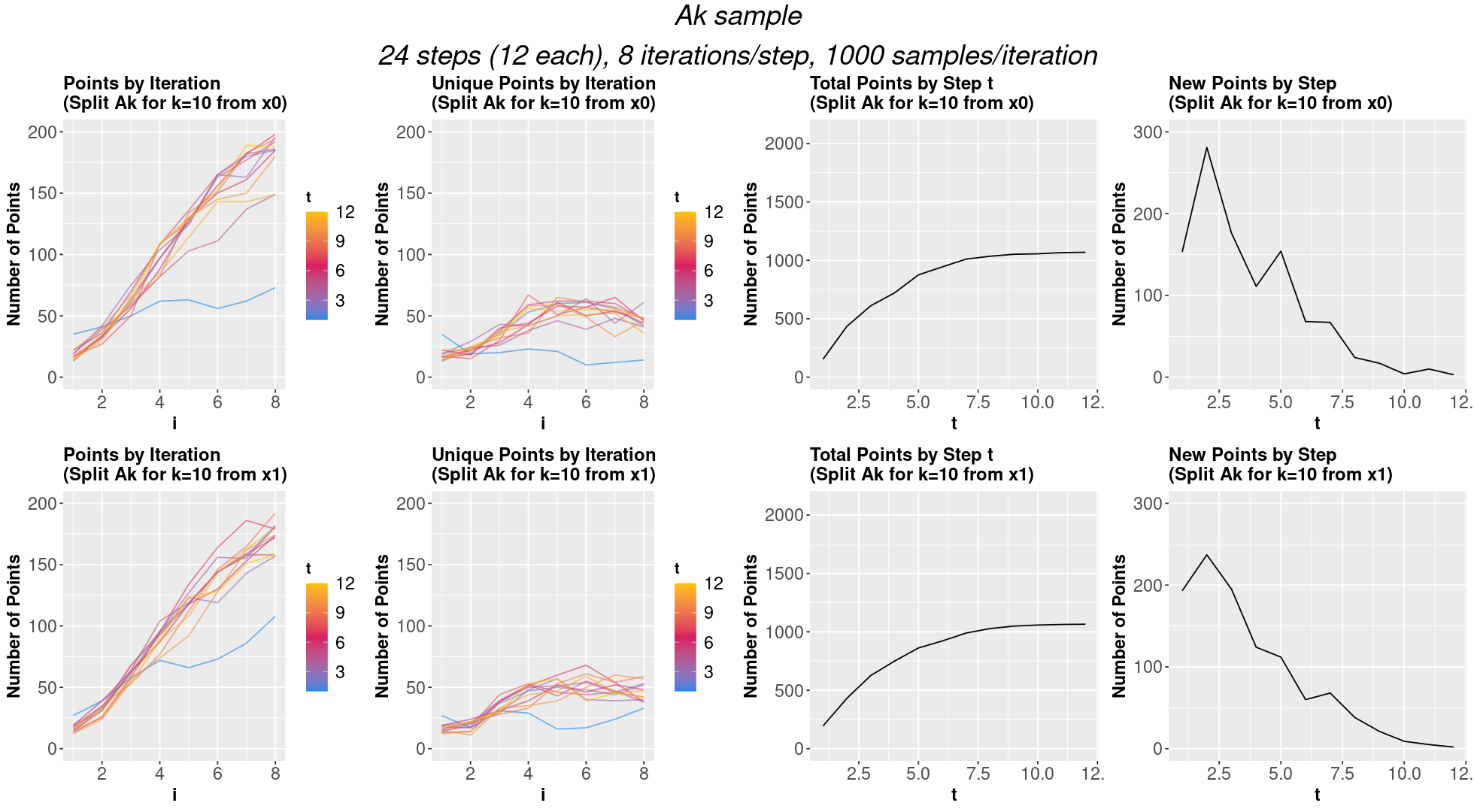}
	\caption{Results from sampling each segment for $A_k$ individually, using the the initial solutions $x_0$ and $x_1$.}\label{fig:Ak_split}
\end{figure}

Next, we turn to fibers which are nearly impossible to sample with Markov or even Graver bases. We take the following family of examples from \cite{HemmeckeWindisch:FiberGraphs}. 
For an integer $k$, the configuration matrix $A$ is: 
\[A= A_k = \begin{pmatrix} 
	I_k & I_k & {\bf 0} &{\bf 0}    & {\bf -1}_k & {\bf 0} \\      
	 {\bf 0}& {\bf 0}&I_k&I_k& {\bf 0}& {\bf -1}_k\\
	  {\bf 0}& {\bf 0}& {\bf 0}& {\bf 0}&  {\bf 1}_k& {\bf 1}_k
	 \end{pmatrix}\in\mathbb Z^{(2k+1)\times(4k+2)},\]
where $I_k$ is the $k\times k$ identity matrix, and ${\bf 1}_k\in\mathbb Z^k$ is the vector of all ones. 
\citeauthor{HemmeckeWindisch:FiberGraphs} use  $A_k$  in conjunction with a specially selected $u$ to prove that Markov chains using any of these algebra basis of the matrix $A$ will have poor mixing times due to the particular fiber graph structure. In essence, they construct a right-hand-side $u$ such that fiber graphs $G(\fiber,B)$ for various bases $B$ had low connectivity similar to the graph in Figure~\ref{figure: bad fibers}. In particular, when $u = \bm{e}_{2k+1}$, for $A_k x = u$, $\fiber$ lies in a high-dimensional space such that the supports of the vectors in one segment do not intersect with the supports of vectors in the other segment. For this $u$, the last two elements of $x$ must be one of $(x_{4k+1}, x_{4k+2}) = (1,0)$ or $(x_{4k+1}, x_{4k+2}) = (0,1)$, such that in the former case, $(x_1,\dots,x_k)\in\{0,1\}^k$ and $(x_{k+1},\dots,x_{2k})\in\{0,1\}^k$ are binary complements, while all other elements are 0. Similarly, in the latter case, $(x_{2k+1},\dots,x_{3k})\in\{0,1\}^k$ and $(x_{3k+1},\dots,x_{4k})\in\{0,1\}^k$ are binary complements, with the remaining elements 0 giving a total fiber size of $2^{k+1}$.

This type of a fiber structure requires a basis move that traverses between the two segments. \cite{HemmeckeWindisch:FiberGraphs} give this move as:
\[b^*=(\bm{0}_k, -\bm{1}_k,\bm{0}_k,\bm{1}_k, \bm{0}_k, -1,1)^{\tr},\]
which corresponds to the edge $G(\fiber, B)$ between the following two points: 
\[x_0=(\bm{0}_k, \bm{1}_k, \bm{0}_k, \bm{0}_k, \bm{0}_k, 1,0)^{\tr}\]
and 
\[x_1=(\bm{0}_k, \bm{0}_k, \bm{0}_k, \bm{1}_k, \bm{0}_k, 0,1)^{\tr}.\]

We ran RUMBA on this fiber for several values of $k$; here we show simulation results for $k=10$. In this case, it is known that the fiber size is $2048$. Any Markov chain using basis elements $B$ will have a low probability of selecting the one edge  in $G(\fiber,B)$ corresponding to the move $b^*$, because this edge must be selected exactly when the chain is sitting at either $x_0$ or $x_1$.  Of course, RUMBA does `notice' the fiber structure, although it is not constructing a Markov chain. It does not have to be at $x_0$ or $x_1$ to select that one-edge move. Instead, once the sampler finds elements in both segments, these elements may be used as the initial solution in subsequent steps of Algorithm \ref{alg:mibfs}. The likelihood of this occurring can be increased by taking $x_0$ or $x_1$ as the first initial solution, and setting the distribution parameters corresponding to $b^*$ to bias in favor of selecting this move. In our implementation, RUMBA selects the next initial solution from the most recent elements (if they exist) and then samples solutions from this point. The results from this method are pictured in Figure~\ref{fig:Ak_single}, where $x_0$ is the first initial solution. It can be seen in these plots that the number of new points found at each step flattens out, before jumping suddenly. This indicates that the sampler moved from sampling points in one segment to the other.

This method is relatively agnostic with respect to the fiber structure, only using it in the selection of the first initial solutions and its initial parameters. Figure \ref{fig:Ak_split} and the runtimes in \ref{fig:runtime}, illustrate how this may not be the most effective method. In this second simulation on the same fiber, RUMBA was run in sequence, first sampling using the starting solution $x_0$ then $x_1$, with the final sample taken as the union of the two runs. For $x_0$ and $x_1$, 1069 and 1065 fiber elements were discovered respectively, with 2041 unique elements discovered in total. As each segment contains 1028 elements, the sampler for both starting elements was able to sample elements in both segments during the first step, however as subsequent steps selected initial solutions from only the newly discovered points, the sampler picked a single segment and continued to sample elements from it until it no longer found new points. An unfortunate side effect of this is that once all points have been sampled from a segment, the selection of initial solutions is heavily biased towards picking points in this segment because there is a higher proportion of them in the full sample.  This seems to indicate that a more sophisticated selection of the initial solutions may be necessary for full discovery of fibers with complicated structures.


\section{Practical considerations for parameter tuning} \label{section:practical}


\subsection{Initial distribution parameters}

Algorithm \ref{alg:bfs} is sensitive to initial values $\alpha_0^{\pm}$ and $\beta_0^{\pm}$. Since $\beta$ parameters correspond to the cumulative sample size across iteration, these should be set such that 
\[\beta_{0,k}^{+} = \beta_{0,k}^{-} = 1, \text{ for all }k \in [K],\] 
because the initial sample is $\{x_t\}$ for each step. Any necessary tuning can then be performed by adjusting the $\alpha$ parameters, since
\[\lambda_{0,k}^{\pm} = \frac{\alpha_{0,k}^{\pm}}{\beta_{0,k}^{\pm}}.\]
Given $\beta_0^{\pm}=1$, in most cases, the choice of 
\[\alpha_{0,k}^{+} = \alpha_{0,k}^{-} = \frac{1}{K}, \text{ for all }k \in [K] \]
works well, since this will give the initial coefficient distribution of $Y_{i,j,k}^{\pm}\sim\dpois(1/K)$. The expected samples for this distribution are sparse and the variance is low for each coefficient, producing short jumps to nearby lattice points, thereby allowing for a local exploration of the fiber. 

Since each $x$ requires a total of $2K$ samples from a Poisson distribution, one for each move in the basis, the runtime may be greatly affected depending on how these values are sampled. Asymptotically, we assume that this algorithm is $\Oo(1)$, however, in practice for certain values of $\lambda$ it may be the case that an algorithm for sampling $\dpois(\lambda)$ with asymptotic complexity $\Oo(\lambda^{1/2})$ performs better than the $\Oo(1)$ algorithm. Even when the individual time difference between Poisson implementation may seem negligible, it is compounded by a factor of $T\times I\times J\times K$, after accounting for steps, iterations, samples, and basis size in Algorithm \ref{alg:mibfs}. While most library implementations of Poisson generation take parameter values into consideration, for large bases it may be prudent to check  each $\lambda^{\pm}_{i,k}$ as they change with each iteration of Algorithm \ref{alg:bfs}, and select a Poisson generation algorithm according to their values. For a comparison of different Poisson sampling algorithms, and which values of $\lambda$ they should be used for, see \cite{KEMP1990133} and \cite{kemp1991:poisson}.

\subsection{Runtime parameters}\label{section: runtime parameters}

Theorems \ref{J:conv}  and \ref{T:conv} and Corollary \ref{I:conv} imply that number of samples $J$, the number of parameter iterations $I$ and the number of steps $T$ should generally be set as large as is tolerable. That said, attention should be paid to the different effects of changing each of these three values. Setting $J$ too small will sometimes lead to the sampler failing to find any fiber elements, especially when the basis does not connect the fiber. Increasing $J$ increases the likelihood of observing extreme values for $Y_{i,j,k}^{\pm}$ in the samples, which corresponds to $Y_{i,j}$'s that are less sparse and have greater magnitude. This may be preferable when using a basis that does not connect the fiber, or when the fiber is given by a polytope of large diameter. However, if the initial $x_t$ is near the boundary of the polytope, increasing $J$ may lead to Algorithm \ref{alg:bfs} discarding a large number of initial samples that fall outside of the fiber. This should be avoided since it is essentially performing unnecessary operations that do not affect any of the parameters. One option to mitigate this is start with a smaller value of $J$ and increase it for later iterations when the $\lambda^{\pm}_{i,k}$'s are more biased toward sampling within the fiber. 

The number of parameter iterations $I$ should be relatively small when $J$ has been set appropriately, since updates to the parameters only occur when new fiber elements are discovered. Typically, Algorithm \ref{alg:ibfs} discovers elements that require sparse $Y_{i,j}$ relatively quickly and then biases new samples toward these elements. In practice this means the algorithm is more likely to re-sample previously discovered elements as $I$ increases. If the portion of the sample that is contained in the fiber at iteration $I$ consists mainly of fiber elements that have not been sampled in previous iterations, then $I$ should be increased. If this portion contains mostly previously sampled elements, decreasing $I$ may be preferable.

While the parameter iterations operate as a local discovery of fiber elements, the steps $t\in[T]$ are global in scope, when $x_t$ is sampled from the new set of discovered fiber points $\Ff_{t-1}^*$. In general $T$ should be as large as possible, especially for fibers with high diameter polytopes. Increasing $T$ allows for larger combinations of basis moves from the initial starting element $x_0$, without the need for sampling dense or high magnitude coefficient vectors $Y_{i,j}$. However, if $T$ is large and $I$ and $J$ are not, then fiber discovery becomes more dependent on the fiber connectivity for the given basis, since the sampled coefficient vectors will be relatively sparse. 

In general, for bases $\Bb$ that fully connect the fiber elements e.g. Markov, Gr\"{o}bner and Graver bases, sparse $Y_{i,j}$ are sufficient for sampling the fiber. Using such a basis does not require large values for the number of iterations $I$ since parameter iterations are primarily used to affect the sparsity. Instead emphasis can be placed on using a large number of steps and samples at each step. While bases with high fiber connectivity are preferable, they are not always computationally feasible. For a lattice basis that does not connect the fiber, care should be taken in the parameter tuning process since certain pairs $x, x_t \in \Ff$ will require dense combinations of basis elements to connect them. 


\section*{Acknowledgements} The authors are at the Department of Applied Mathematics, Illinois Institute of Technology. This work is supported by DOE/SC award \#1010629 and the Simons Foundation Travel Gift \#854770.
We are grateful for productive discussions and input from   Amirreza Eshraghi, F\'elix Almendra-Hern\'andez, David Kahle, Hisayuki Hara, and Liam Solus. In particular, we thank  Jes\'us A. De Loera for suggesting the name RUMBA for the sampler.

\bibliography{three-way-tables,AlgStatAndNtwks,MarkovBases,Algorithms,AlgebraAndRandomness}
\bibliographystyle{abbrvnat}

\section*{Appendix: notation} \label{section:appendix}

{\bf Indices:}
\begin{itemize}[label={}]
	\item $N$: \# Constraints
	\item $M$: \# Variables
	\item $K$: \# Basis Vectors
	\item $T$: \# Time Steps
	\item $I$: \# Iterations
	\item $J$: \# Samples
\end{itemize}
{\bf Matrices and Vectors:}
\begin{itemize}[label={}]
	\item $A\in \Z^{N\times M}$
	\item $x_0 \in \Z_{\ge 0}^M$
	\item $x_t \in \Ff_{t-1}$.
	\item $u \in \Z^N$ such that $Ax_0 = u$
	\item $B = \left[b_1\,\, b_2\,\,\dots\,\, b_K\right] \in \Z^{M\times K}$ such that $\colsp(B) = \ker_{\Z} A$
\end{itemize}
{\bf Random Variables:}
\begin{itemize}[label={}]
	\item $Y_{i,j,k}^{+} \sim \dpois(\lambda_{i,k}^{+})$ for $i \in [I]$, $j \in [J]$, and $k \in [K]$
	\item $Y_{i,j,k}^{-} \sim \dpois(\lambda_{i,k}^{-})$ for $i \in [I]$, $j \in [J]$, and $k \in [K]$
	\item $Y^{\pm}_{i,j} = \left(Y_{i,j,1}^{\pm},\dots,Y_{i,j,K}^{\pm}\right)$ for $i \in [I]$ and $j \in [J]$
	\item $Y_{i,j} = Y^+_{i,j} - Y^-_{i,j}$ for $i \in [I]$ and $j \in [J]$
	\item $X_{t,i,j} = x_t + BY_{i,j}$ for $t\in [T]$, $i\in [I]$
	\item $S_{i,j^*,k}^+ = \sum_{j \le j^*\colon X_{t,i,j}\in \Ff_{t,i}^*} Y_{i,j,k}^+$ for $i \in [I]$, $j^*\in[J]$, and $k \in [K]$
	\item $S_{i,j^*,k}^- = \sum_{j \le j^*\colon X_{t,i,j}\in \Ff_{t,i}^*} Y_{i,j,k}^-$ for $i \in [I]$, $j^*\in[J]$, and $k \in [K]$
	\item $S^{\pm}_{i,k} = S_{i,J,k}$ for $i \in [I]$ and $k \in [K]$
	\item $\Yy_{t,i}$ the set of $Y_{i,j}$ sampled at the $i^{\text{th}}$ iteration of the $t^{\text{th}}$ step
	\item $\Yy_{t,i}^* = \{Y_{i,j}\in \Yy_{t,i} \colon \exists j \in [J], \text{ such that } X_{t,i,j}\in \Ff_{t,i}^* \text{ and } X_{t,i,j} = x_t + BY_{i,j} \}$.
\end{itemize}
{\bf Samples:}
\begin{itemize}[label={}]
	\item $\Ff =\Ff_A(u)= \{x\in \Z_{\ge 0}^M\colon Ax = u \}$
	\item $\Ff_t$: Unique points sampled after $t \in [T]$ steps. 
	\item $\Ff_{0,0} = \{x_0\}$
	\item $\Ff_{t,0} = \Ff_{t-1}$ for $t \in [T]$
	\item $\Ff_{t,i}^*$: New points sampled a step $t\in [T]$ and iteration $i \in [I]$.
	\item $\Ff_{t,i} = \Ff_{t,i-1} \cup \Ff_{t,i}^*$ for $t \in [T]$
	\item $n_{t,i} = |\Ff_{t,i}^*|$
\end{itemize}
{\bf Iterated Parameters:}
\begin{itemize}[label={}]
	\item $\alpha^{+}_0 = \left(\alpha_{0,1}^{+},\dots, \alpha_{0,K}^{+}\right)\in \R^{K}_{\ge 0}$
	\item $\alpha^{-}_0 = \left(\alpha_{0,1}^{-},\dots, \alpha_{0,K}^{-}\right)\in \R^{K}_{\ge 0}$
	\item $\beta^{+}_0 =\beta^{-}_0 = \left(1,\dots, 1\right)\in \R^{K}_{> 0}$
	\item
	\item $\alpha_{i,k}^{\pm} = \alpha_{i-1,k}^{\pm} + S^{\pm}_{i,k}$ for  $i \in [I]\cup \{0\}$ and $k \in [K]$
	\item $\beta_{i,k}^{\pm} = \beta_{i-1,k}^{\pm} + n_{t,i}$ for  $i \in [I]\cup \{0\}$ and $k \in [K]$
	\item $\lambda^{\pm}_{i,k} = \frac{\alpha_{i,k}^{\pm}}{\beta_{i,k}^{\pm}}$ for $i \in [I]\cup \{0\}$ and $k \in [K]$
	\item
	\item $\alpha^{\pm}_i = \left(\alpha_{i,1}^{\pm},\dots, \alpha_{i,K}^{\pm}\right)\in \R^{m}_{\ge 0}$ for $i \in [I]\cup \{0\}$.
	\item $\beta^{\pm}_i = \left(\beta_{i,1}^{\pm},\dots, \beta_{i,K}^{\pm}\right)\in \R^{m}_{> 0}$ for $i \in [I]\cup \{0\}$.
	\item $\lambda^{\pm}_i = \left(\lambda_{i,1}^{\pm},\dots, \lambda_{i,K}^{\pm}\right)\in \R^{m}_{\ge 0}$ for $i \in [I]\cup \{0\}$.
	\item $\theta_i=(\alpha^+_i, \beta^+_i,\alpha^-_i,\beta^-_i)$ for $i \in [I]\cup \{0\}$.
\end{itemize}

\end{document}